\newtheorem{theorem}{Theorem}
\newtheorem{lemma}[theorem]{Lemma}
\newtheorem{corollary}[theorem]{Corollary}
\newcommand{\ZZ}{\mathbb{Z}}
\newcommand{\RR}{\mathbb{R}}
\newcommand{\CC}{\mathbb{C}}
\newcommand{\NN}{\mathbb{N}}
\newcommand{\mc}[1]{\mathcal{#1}}
\newcommand{\bra}[1]{\langle #1|}
\newcommand{\ket}[1]{|#1\rangle}
\newcommand{\braket}[2]{\langle #1|#2\rangle}
\newcommand{\ketbra}[2]{| #1 \rangle \langle #2 |}
\newcommand{\proj}[1]{\vert #1\rangle\!\langle#1 \vert}
\def\id{{\mathbb{I}}}
\newcommand{\norm}[1]{\left\Vert #1 \right\Vert}
\def\one{\mathbf{1}}
\newcommand{\Tr}{\operatorname{tr}}
\newcommand{\tr}{\Tr}
\DeclareMathOperator*{\argmin}{arg\,min}
\begin{document}
\author{P. Boes, H. Wilming, R. Gallego, J. Eisert}
\newcommand{\fu}{Dahlem Center for Complex Quantum Systems, Freie Universit{\"a}t Berlin, 14195 Berlin, Germany}
\affiliation{\fu}

\title{Catalytic quantum randomness}
\date{\today}

\begin{abstract}
Randomness is a defining element of mixing processes in nature and an essential ingredient to
many protocols in quantum information. In this work, we 
investigate how much randomness is required to transform a given quantum state into another one. Specifically, 
we ask whether there is a gap between the power of a classical source of randomness compared to that of a quantum one. We provide a complete answer to these questions, by identifying provably
 optimal protocols for both classical and quantum sources of randomness, based on a dephasing construction. 
 We find that in order to implement any noisy transition on a $d$-dimensional quantum system it is necessary and sufficient to have a quantum source of randomness of dimension $\sqrt{d}$ or a classical one of dimension $d$. 
 Interestingly, coherences provided by quantum states in a source of randomness offer
 a quadratic advantage.
 The process we construct has the additional features to be robust and 
 catalytic, i.e., 
 the source of randomness can be re-used. 
Building upon this formal framework, we illustrate that this dephasing construction can serve as a useful primitive in both equilibration and quantum information theory: We discuss
 applications describing the smallest measurement device, capturing the smallest
 equilibrating environment allowed by quantum mechanics, or forming the basis for a cryptographic
 private quantum channel. We complement the exact analysis with a discussion of approximate protocols based on quantum expanders deriving from discrete Weyl
systems. This gives rise to equilibrating environments of remarkably small dimension. Our 
results highlight the curious feature of randomness that residual correlations and dimension 
can be traded against each other.
\end{abstract}

\maketitle
\section{Introduction}

Randomness is a central concept and resource in various fields of research
in computer science, information theory and physics, in both the classical and the quantum realm.
It is an ingredient to (quantum) algorithm design, a core element in 
coding and communication protocols, and plays a central role in fundamental aspects
of statistical mechanics. In the quantum context, randomness is also increasingly
being seen as a valuable resource. {A natural question that arises in this
context is then how much of it is required to implement a given physical process on a quantum system. 
Another important question is to what extent the required amount of randomness differs depending on whether an \emph{implicit} or an \emph{explicit} model of randomness is employed. Here, an implicit model of randomness considers the \emph{source of randomness (SoR)} as a black box that provides coin flips, while an explicit model takes into account the fact that, fundamentally, all systems including the ones provided by the SoR are quantum systems, and hence models the randomness as a quantum state. 

In this work, we give a complete answer to both of the above questions. We provide, for both the implicit and explicit model, optimal and tight bounds on the amount of randomness required to implement physical processes on quantum systems. Moreover, we show a strict separation between the above models, in the sense that every physical process can be implemented in the explicit model by using only half the amount of randomness that is required in the implicit model.}

Specifically, we use a model of noisy processes---processes that require randomness--- known as \emph{noisy operations}~\cite{Gour2015Resource}. 
We study the minimal amount of noise required to implement a large variety of noisy processes and construct protocols that saturate the lower bounds imposed by quantum mechanics. These processes include 
\emph{dephasing and equilibration}~\cite{LongReview,Linden2012}, \emph{decoherence}~\cite{Zurek2003,Zeh}, 
the \emph{implementation of measurements}~\cite{HolevoBook,Nielsen2000,Zeh}, any \emph{transition} between two quantum states that requires randomness~\cite{Gour2015Resource} as well as the novel construction of \emph{private quantum channels}~\cite{Ambainis2000,Boykin2003Optimal}. 

It is an important aspect of our work that, by virtue of an explicit model, these saturated lower bounds also translate into bounds on the \emph{physical size} of an SoR. This insight allows us to
construct, for particular processes, the \emph{smallest decohering environment} 
or measurement device compatible with quantum mechanics~\cite{Zurek2003}. Put in a different language, 
it provides an understanding of the \emph{smallest equilibrating 
environment}~\cite{LongReview} possible. The surprisingly small size that suffices for an environment to be equilibrating challenges the commonly held view that such decohering baths should necessarily feature 
a large
dimension.

A further notable feature of the protocols that we construct is that they are \emph{catalytic}: 
The same unit of randomness can be \emph{re-used} for different processes~\cite{Mueller2017}.
It is also 
\emph{robust}, in the sense that we do not require perfect control in either the states prepared by the SoR or the timing of the process,
and further \emph{recurrent}, in the sense that, for large system dimension $d$, continuous time versions of our noisy processes maintain a state close to the desired final state for times $\tau \propto \sqrt{d}$, at which point the systems recurs to the initial state. 

\section{Classical versus quantum noise}\label{sec:classicalvsquantum}

Let us begin with discussing in more detail the difference between classical and quantum uses of randomness.
Consider initial and final (mixed) states $\rho,\rho'$ on 
a Hilbert space $\mc H_S$ of dimension ${\rm \dim}( \mc H_S)=d$. 
We are concerned with the possibility of implementing a transition $\mc E(\rho)=\rho'$, where 
$\mc E$ represents a noisy process. 
There exist different ways of modeling the maps $\mc E$ which we now explain in detail.

In a classical, implicit model of the SoR one assumes a discrete random variable $J$ that is uniformly distributed over $m$ possible values. 
Depending on the value of $j$ one implements a given unitary transformation $U_j$, which gives rise to the operations
\begin{align}\label{eq:def:classicalnoise}
\mc E_{\rm C}^m(\:\cdot \:)=\frac{1}{m} \sum_{i=1}^{m} U_i \cdot U_i^{\dagger}.
\end{align}
If there exist $\mc E_{\rm C}^m$ so that a transition is possible, we simply denote it by $\rho  \overset{m}{\to}_C  \rho'$. In constrast, in an explicit quantum model, the SoR is a quantum system $R$ in the maximally mixed state of dimension $m$, which we denote by $\id_m \coloneqq \frac{1}{m} \mathbbm{1}$, with $\mathbbm{1}$ being the identity matrix.
In this model, noisy processes are any effect of a unitary joint evolution of the compound,
\begin{align}\label{eq:def:quantumnoise}
\mc E_{\rm Q}^m(\:\cdot \:)=  \tr_R[U(\:\cdot \: \otimes \id_m)U^\dagger].
\end{align}
As in the classical case, we write $\rho \overset{m}{\to}_Q \rho'$ whenever the transition is possible. 

The set of transitions that can be implemented with both classical and quantum noise coincide if the amount of 
noise --- quantified by the dimension $m$ --- is unbounded. In this case we have
\begin{align} \label{eq:equiv_for_infty}
\rho \overset{\infty}{\to}_C \rho' \Leftrightarrow \rho \overset{\infty}{\to}_Q \rho' \Leftrightarrow \rho \succ \rho'
\end{align}
where we use the symbol ``$\succ$'' to indicate that $\rho$ majorizes $\rho'$~\cite{Marshall2011Inequalities}. 
The set of transitions $\rho \overset{\infty}{\to}_Q \rho'$ have been extensively studied as \emph{noisy operations}~\cite{Gour2015Resource}, where the noise is treated as a free resource and the main concern is to study the possible transitions with unbounded $m$. 
In contrast, here we are concerned with treating noise as a valuable resource and focus on the following question: 
What is the minimal amount of noise---quantified by $m$---that serves to implement any possible transition between pairs of $d$-dimensional quantum states fulfilling $\rho \succ \rho'$? 
We denote these minimal values of $d$ for the classical and quantum case by $m^*_C(d)$ and $m^*_Q(d)$, respectively.

At first glance, one might suspect that $m^*_C(d)=m^*_Q(d)$, with quantum noise offering no advantage over its classical counterpart. 
That intuition comes from the fact that, although one writes a full quantum description in~\eqref{eq:def:quantumnoise}, the state of $R$, given by $\id_m$, is nevertheless a quasi-classical state. 
Hence it seems reasonable that it could be recast as a classical variable, similarly as in~\eqref{eq:def:classicalnoise}. 
However, treating the noise as a quantum state allows one to access its quantum degrees of freedom, for example to create entanglement between the $S$ and $R$. 
In other words, one could in principle use quantum correlations to make a more efficient use of the noise yielding $m^*_C(d)>m^*_Q(d)$. 

One of the main results of this work is to show that there is indeed a gap between the classical and quantum case. 
We find that $m^*_C(d)=d \:{>}\:\lceil d^{1/2}\rceil=m^*_Q(d)$ and more importantly, we construct protocols that saturate those bounds. 
In this way, we provide protocols that use the noise optimally for a large variety of tasks. 
These protocols also have a number of useful properties such as allowing one to re-use the noise or being robust under different classes of imperfections.
In the subsequent section, we present the key lemma to construct such optimal protocols and then turn to discuss applications and properties in Section~\ref{sec:applications}.

\section{An optimal dephasing map} 
\label{sec:an_optimal_dephasing_map}

For any state transition $\rho\to \rho'$ that is possible under either quantum or classical noisy processes, there exists a corresponding map $\mc E(\rho) = \rho'$ such that
\begin{align} \label{eq:decomp_channel}
    \mc E(\cdot) = \mc U' \circ \pi_{A} \circ \mc U(\cdot).
\end{align}
Here $\mc U', \mc U$ are unitary channels that depend on $\rho$ and $\rho'$. 
The map $\pi_{A}$ is the dephasing map in a fixed orthonormal basis $A = \{\ket{i}\}_{i=1}^d$, defined as 
\begin{align}
    \bra{i}\pi_A(\rho) \ket{j} = \bra{i}\rho\ket{j} \delta_{i,j},
\end{align}
with $\delta_{i,j}$ being the Kronecker delta. 
This follows from the Schur-Horn-Theorem~\cite{Horn1954Doubly} together with~\eqref{eq:equiv_for_infty} and was used to bound the required randomness for noisy processes already in Ref.\ \cite{Scharlau2016Quantum}. 
Since the unitary channels $\mc U', \mc U$ do not require the use of any SoR by definition, we see from~\eqref{eq:decomp_channel} that noise is required only for the implementation of the dephasing map $\pi_A$. 
In turn,~\eqref{eq:decomp_channel} implies that whether $\mc E$ represents a quantum noisy process or a classical one, depends only on the particular implementation of this dephasing map: 
Any construction of $\pi_A$ in the form of~\eqref{eq:def:quantumnoise} with $m$-dimensional SoR 
implies also that $\mc{E}$ is a map $\mc E_Q^m$, while any construction of it in the form of~\eqref{eq:def:classicalnoise} implies that $\mc{E}$ is of the form $\mc E_C^m$.

Understanding the amount of randomness required to implement the dephasing map therefore is key to understanding the amount of randomness required to implement any noisy process. The following lemma provides a protocol implementing a dephasing map in any basis, using an explicit model model of noise and requiring  a SoR of dimension $m=\lceil d^{1/2} \rceil$.
\begin{lemma}[Catalytic quantum dephasing]\label{lem:dephasing_basic}
 For any integer $d$ and basis $A$ there exists a unitary $U$ so that
 \begin{align}
	\label{eq:dephasingcondition1}     &\Tr_R[U \: (\cdot \: \otimes\: \id_{\lceil d^{1/2}\rceil} )\: U^\dagger] = \pi_A(\cdot),\\
	 \label{eq:dephasingcondition2}   &\Tr_S[U (\rho   \otimes  \id_{\lceil d^{1/2}\rceil} ) U^\dagger] = \id_{\lceil d^{1/2}\rceil}\:\: \forall \rho.
 \end{align}
\end{lemma}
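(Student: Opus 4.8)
The plan is to construct the required unitary $U$ explicitly as a \emph{controlled Weyl operator}, exploiting the fact that $m=\lceil d^{1/2}\rceil$ guarantees $m^{2}\geq d$. The starting observation is that on the $m$-dimensional register $R$ one has the shift and clock operators $X\ket{r}=\ket{r+1\ \mathrm{mod}\ m}$ and $Z\ket{r}=\e^{2\pi\ii r/m}\ket{r}$, from which the $m^{2}$ discrete Weyl operators $W_{a,b}=X^{a}Z^{b}$, with $(a,b)\in\ZZ_m\times\ZZ_m$, are built. These are unitary, $W_{a,b}W_{a,b}^{\dagger}=\mathbbm 1$, and form an orthogonal operator basis, so that $\tfrac1m\Tr[W_{a,b}^{\dagger}W_{a',b'}]=\delta_{(a,b),(a',b')}$. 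Since $m^{2}\geq d$, I can label the basis $A=\{\ket{i}\}_{i=1}^{d}$ by $d$ distinct Weyl operators $W_{1},\dots,W_{d}$ and set
\begin{align}
    U = \sum_{i=1}^{d}\ketbra{i}{i}\otimes W_{i}.
\end{align}
Being block diagonal with unitary blocks, $U$ is unitary on $\mc H_{S}\otimes R$.

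For the dephasing condition~\eqref{eq:dephasingcondition1}, I would compute, on a basis operator, $U(\ketbra{i}{j}\otimes\id_m)U^{\dagger}=\ketbra{i}{j}\otimes\tfrac1m W_{i}W_{j}^{\dagger}$ and then trace out $R$. The scalar factor becomes $\tfrac1m\Tr[W_{i}W_{j}^{\dagger}]=\delta_{i,j}$ by orthogonality, leaving exactly $\delta_{i,j}\ketbra{i}{j}=\pi_A(\ketbra{i}{j})$; by linearity this reproduces $\pi_A$ on all inputs. For the catalytic condition~\eqref{eq:dephasingcondition2}, tracing out $S$ instead annihilates every off-diagonal term, since $\Tr_{S}[\ketbra{i}{j}]=\delta_{i,j}$, so only the diagonal weights $\bra{i}\rho\ket{i}$ survive, each multiplying $\tfrac1m W_{i}W_{i}^{\dagger}=\id_m$; summing and using $\Tr\rho=1$ returns $\id_m$ for every $\rho$, which is exactly the statement that the source of randomness is left untouched.

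The main obstacle to anticipate --- and the reason the $\sqrt d$ scaling is interesting --- is that the most natural first attempt fails. If one tries a controlled permutation $\ket{i}\ket{r}\mapsto\ket{i}\ket{f_{i}(r)}$ or a controlled phase $W_{i}=Z^{c_{i}}$, then the off-diagonal term for $i\neq j$ survives unless the conditional operators are pairwise orthogonal on $R$; but a single $\ZZ_m$-indexed family offers only $m<d$ mutually distinguishable alternatives, which cannot separate $d$ basis states. The resolution is precisely to pass to the \emph{two-parameter} Weyl family, whose $m^{2}\geq d$ elements supply exactly the orthogonal conditional operators required. This is where the quadratic advantage originates, and it is the only genuinely nontrivial ingredient; the two trace computations above are then routine, with the catalytic property following for free from the unitarity of each block.
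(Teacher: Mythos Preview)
Your proof is correct and follows essentially the same approach as the paper: both construct $U=\sum_i\ketbra{i}{i}\otimes U_i$ for a unitary operator basis $\{U_i\}$ on the $m$-dimensional ancilla and verify the two trace identities via orthogonality and unitarity. The only difference is that you instantiate the operator basis concretely as the discrete Weyl operators $X^aZ^b$, whereas the paper invokes an abstract unitary operator basis (whose existence it cites); your handling of the non-square case by starting directly from $m=\lceil d^{1/2}\rceil$ and noting $m^2\geq d$ is likewise equivalent to the paper's remark that one simply does not exhaust all $m^2$ available unitaries.
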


\begin{proof}
Assume first that $ \sqrt{d} = m \in \NN$. 
Now, let $\{U_i\}$ be a unitary operator basis for $\mc B(\mc H_R)$, that is, a collection of $m^2=d$ unitary operators $U_i \in \mc B(\mc H_R)$ such that 
\begin{align} \label{eq:prop_unitary_basis}
     \frac{1}{m}\Tr(U_i U_j^\dagger) &= \delta_{i,j}
\end{align}
for all $i,j$.
Such a basis exists for every $m$~\cite{Schwinger1960,Werner2001All}. 
We now define the unitary
\begin{align} \label{eq:dephasing_unitary_basic}
     U = \sum_{i=1}^d \proj{i} \otimes U_i,
\end{align}
where the $\{\ket{i}\}$ are elements of the basis $A$ in which we intend to pinch. 
Then, for any density matrix $\rho$ on $\mc H_S$, 
\begin{align}
    \tr_R[U (\rho \otimes \id_m )U^\dagger] &= \sum_{i,j} \proj{i}\rho \proj{j} \frac{1}{m}\Tr(U_i U^\dagger_j) \\
	    &= \sum_{i,j} \proj{i}\rho \proj{j} \delta_{i,j} = \pi_A(\rho).
\end{align}
Lastly, note that Eq.\ ~\eqref{eq:dephasingcondition2} follows simply by
\begin{align}\label{eq:map_on_noise}
 \tr_S[U (\rho  \otimes \id_m ) U^\dagger] &=\sum_i \bra{i}\rho \ket{i} U_i  \id_m U_i^\dagger=\id_ m .
\end{align}
In the case where $\sqrt{d}$ is not an integer, we can use the same construction with a source of randomness of dimension $m =\lceil d^{1/2} \rceil$ by simply not exhausting all possible $m^2$ possible unitaries $U_i$ on $R$. 
\end{proof}

The protocol of Lemma~\ref{lem:dephasing_basic} is optimal, in the sense that it is impossible to implement the dephasing map with $m<\lceil d^{1/2}\rceil$. 
This can be seen by noting that for any basis $A$ one can always choose an initial pure state $\rho$ so that $\pi_A(\rho)=\id_d$. 
Using the preservation of the von Neumann-entropy under unitaries and the Lieb-Araki triangle inequality one finds that $m\geq \sqrt{d}$ (see Appendix~\ref{sec:necessary}). 
This implementation of the dephasing map compares with the best value known to date of $m=d$, proven in 
Ref.~\cite{Scharlau2016Quantum}, whose implementation can in fact be shown to correspond to a classical noisy operation of the form~\eqref{eq:def:classicalnoise} as we will see later.

\subsection{Catalyticity} 
\label{sub:catalyticity}

Equation~\eqref{eq:dephasingcondition2} states that the dephasing operation defined in Lemma~\ref{lem:dephasing_basic} leaves the state of $R$ invariant, or in other words, that the noise is catalytic~\cite{JonathanPlenio, Mueller2017,Ng14,PhysRevLett.85.437}.
This property has numerous useful applications.
For instance, an immediate corollary of the lemma is that one can locally dephase an arbitrarily large number of uncorrelated systems, each of them of dimension at most $d$, by using a single noise system $R$ of dimension $\lceil d^{1/2} \rceil$. 
More formally, we have that for any set of states $\{\rho^i\}_{i=1}^{N}$ there exists a unitary $U$ so that
\begin{align}\label{eq:local_dephasing}
\tr_R[U (\rho^i_{S_1} \otimes \cdots \otimes  \rho^i_{S_N}\otimes \id_{\lceil d^{1/2}\rceil}) \: U^\dagger]=  \rho'_{S_1,\ldots,S_N}
\end{align}
where $\rho'_{S_i}=\pi_{A_i}(\rho^i_{S_i})$. 
This follows by simply iterating the unitaries of Lemma~\ref{lem:dephasing_basic} with all the subsystems and re-using the noise as illustrated in the top of Fig.~\ref{fig:dephasing}. 
In contrast, if the noise would not have the property of being catalyticm, then it would be necessary to employ a new mixed state for each of the subsystems, in which case an amount of randomness proportional to $N$ would be required. (bottom of Fig.~\ref{fig:dephasing}). It is important to note, however, that reusing the randomness comes at the cost of correlating the subsystems amongst each other. Hence, if a protocol requires for the individual systems to remain uncorrelated, one still has to resort to a scheme whose required randomness scales linearly with the number of subsystems.

\begin{figure}[tb]
\includegraphics[width=0.4\textwidth]{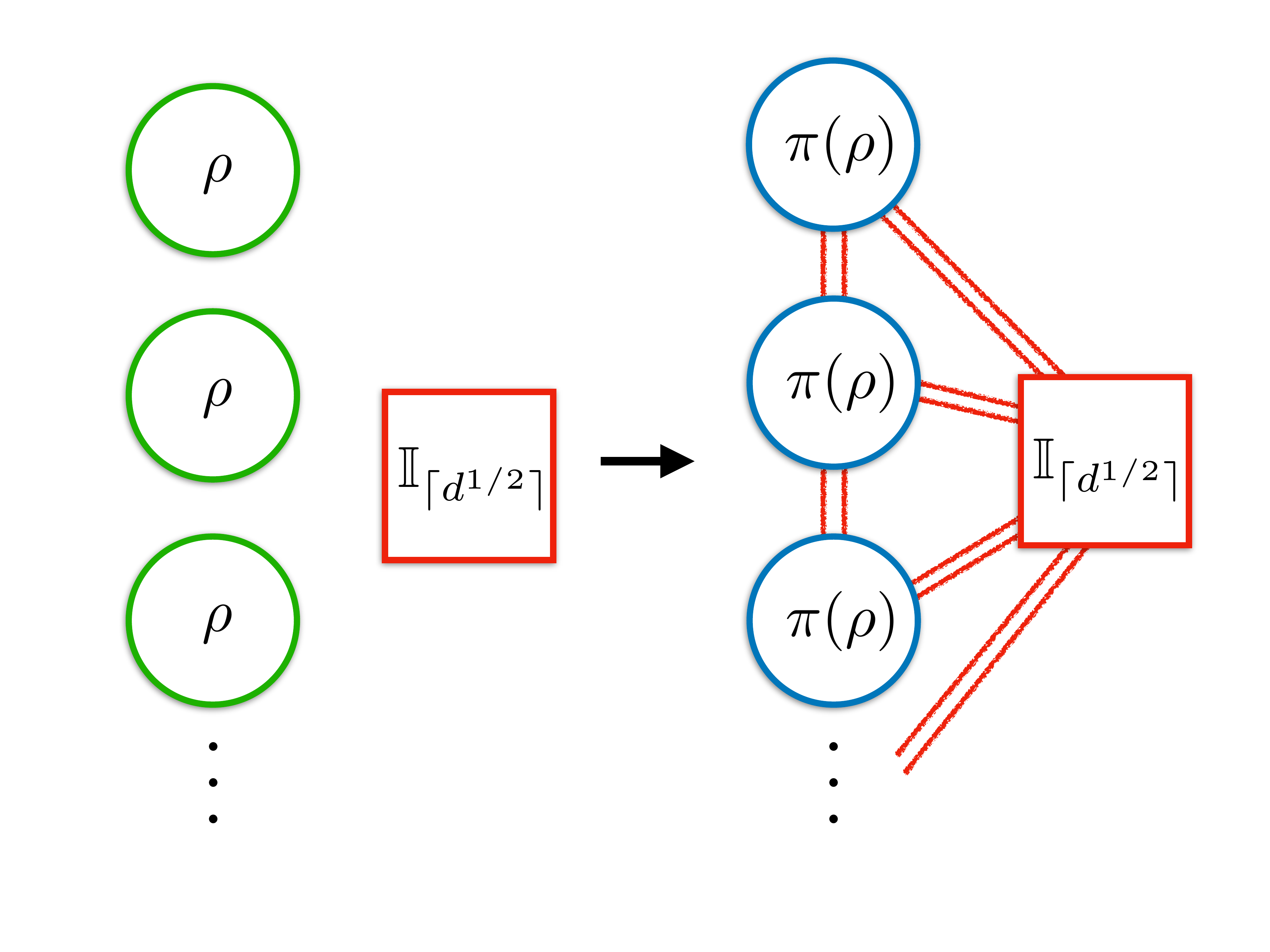}
\includegraphics[width=0.4\textwidth]{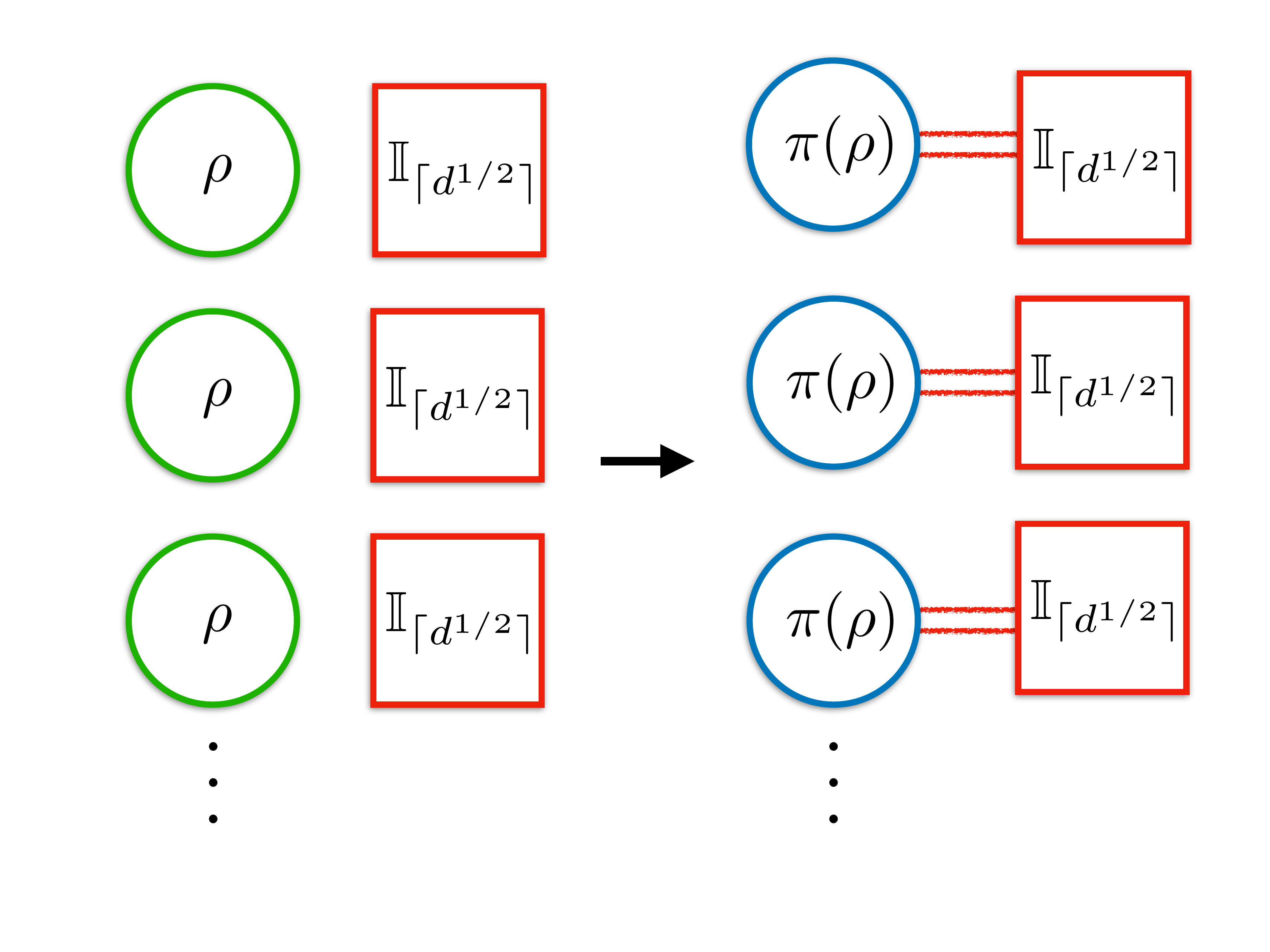}
\caption{Two possible ways of dephasing and the resulting correlation structure. Top: A sequence of systems in state $\rho$ is dephased using a single state of randomness, with correlations being established between all systems involved. The local margins of the resulting global state \eqref{eq:local_dephasing} are the dephased initial states. Bottom: In order to avoid correlations between the systems, one can instead use additional and unused randomness.} 
\label{fig:dephasing}
\end{figure}

As sketched already, dephasing can be related to many processes that require noise, both in engineered
as well as in equilibrating natural quantum processes. In the remainder of this work, we discuss and present applications of Lemma~\ref{lem:dephasing_basic} to these processes.

\section{Applications} 
\label{sec:applications}

\subsection{Minimal noise for state transitions} 
\label{sub:first_application_noisy_operations}

As a first application, we prove the tight bounds for noisy operations presented in Section~\ref{sec:classicalvsquantum}. 
Formally, given a Hilbert space $\mc H_S$ with $\dim(\mc H_S)=d$, we define the minimal noise for the classical and quantum case as
\begin{align}
\label{eq:def:minimalclassical}m^*_C(d)\coloneqq &\argmin_m \: \rho \overset{m}{\to}_C \rho' \:\:\:\forall \rho,\rho' \in \mc B(\mc H_S)\: | \: \rho \succ \rho' ,\\
\label{eq:def:minimalquantum}m^*_Q(d)\coloneqq &\argmin_m \: \rho \overset{m}{\to}_Q \rho' \:\:\:\forall \rho,\rho' \in \mc B(\mc H_S)
\: | \: \rho \succ \rho'.
\end{align}
In the following lemma we find the values of the above quantities, thus providing the smallest SoR that suffices to perform any transition between two states $\rho \succ \rho'$. Note, however, that it is possible for {particular transitions to require even less randomness or none at all.} 
\begin{lemma}[Optimal {source of randomness} for state transitions]\label{lemma:optimaldimensions}
Any state transition of a $d$-dimensional system that is possible under noisy processes, in the sense of~\eqref{eq:def:minimalclassical} and~\eqref{eq:def:minimalquantum}, can be implemented using an amount of classical and quantum noise given by
\begin{align}
\label{eq:optimalclassical}&m^*_C(d)=d,\\
\label{eq:optimalquantum}&m^*_Q(d)=\lceil d^{1/2} \rceil.
\end{align}
\end{lemma}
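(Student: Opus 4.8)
The plan is to prove the four inequalities $m^*_Q(d)\le\lceil d^{1/2}\rceil$, $m^*_Q(d)\ge\lceil d^{1/2}\rceil$, $m^*_C(d)\le d$, and $m^*_C(d)\ge d$ separately, leaning on the channel decomposition~\eqref{eq:decomp_channel} for the two upper bounds and on a single, well-chosen witness transition for the two lower bounds.

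For the upper bounds I would start from~\eqref{eq:decomp_channel}: every feasible transition factors as $\mc E = \mc U' \circ \pi_A \circ \mc U$, with $\mc U,\mc U'$ free unitary channels and $\pi_A$ a dephasing in some basis $A$ whose dimension is always $d$, independently of the pair $(\rho,\rho')$. It therefore suffices to implement $\pi_A$ with the claimed resources uniformly in $A$, and then to verify that pre- and post-composing with the free unitaries does not enlarge the SoR. In the quantum case Lemma~\ref{lem:dephasing_basic} already furnishes a unitary dilation of $\pi_A$ with an SoR of dimension $\lceil d^{1/2}\rceil$; writing the total dilation as $W=(U'\otimes\id_R)\,U\,(U_{\mc U}\otimes\id_R)$ acting on $\rho\otimes\id_{\lceil d^{1/2}\rceil}$ shows the deterministic unitaries act only on $S$ and leave the SoR of dimension $\lceil d^{1/2}\rceil$ untouched, giving $m^*_Q(d)\le\lceil d^{1/2}\rceil$. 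In the classical case I would realize $\pi_A$ by the clock-operator mixture $\pi_A(\cdot)=\frac{1}{d}\sum_{k=0}^{d-1}Z^k\cdot Z^{-k}$, with $Z$ the generalized phase operator diagonal in $A$; this is manifestly of the form~\eqref{eq:def:classicalnoise} with $m=d$, and absorbing the free unitaries into each branch via $V_k=U' Z^k U_{\mc U}$ preserves the mixture over exactly $d$ terms, so $m^*_C(d)\le d$.

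For both lower bounds I would use the same witness pair $\rho=\proj{\psi}$, an arbitrary pure state, and $\rho'=\id_d$; since the eigenvalues $(1,0,\dots,0)$ majorize the flat distribution, $\proj{\psi}\succ\id_d$ and the pair lies in the feasible set. Classically, $\mc E_C^m(\proj{\psi})$ is an average of at most $m$ rank-one projectors and hence has rank at most $m$, whereas $\id_d$ has rank $d$; thus any classical realization forces $m\ge d$, establishing $m^*_C(d)\ge d$. In the quantum model I would run the entropy argument sketched after Lemma~\ref{lem:dephasing_basic}: the global state $U(\proj{\psi}\otimes\id_m)U^\dagger$ has von Neumann entropy $\log m$, its $S$-marginal is $\id_d$ of entropy $\log d$, and the Lieb--Araki triangle inequality together with $S(\sigma_R)\le\log m$ yields $\log d\le 2\log m$, i.e.\ $m\ge\lceil d^{1/2}\rceil$. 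This gives $m^*_Q(d)\ge\lceil d^{1/2}\rceil$ and closes all four bounds.

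I expect the main subtlety to lie not in any single computation but in the bookkeeping that makes the upper bounds \emph{uniform}: the integer $m$ in~\eqref{eq:def:minimalclassical}--\eqref{eq:def:minimalquantum} must work simultaneously for every admissible $(\rho,\rho')$, and the delicate point is precisely that the free unitaries $\mc U,\mc U'$, although they depend on the pair, can always be folded into the dephasing dilation without inflating the SoR. The lower-bound side is comparatively transparent, its only real content being the contrast between a \emph{linear} rank bound in the classical case and a \emph{quadratic} entropy bound in the quantum case, which is exactly the origin of the quadratic quantum advantage asserted in the statement.
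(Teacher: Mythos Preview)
Your proposal is correct and mirrors the paper's argument: upper bounds via the Schur--Horn decomposition~\eqref{eq:decomp_channel} together with Lemma~\ref{lem:dephasing_basic} (quantum) and the clock-operator mixture $\pi_A(\cdot)=\tfrac{1}{d}\sum_k Z^k\cdot Z^{-k}$ (classical), and lower bounds via the pure-to-maximally-mixed witness with a rank bound in the classical case and the Lieb--Araki triangle inequality in the quantum case. The only cosmetic difference is that the paper derives the $Z$-mixture by Fourier-transforming a controlled-shift dilation rather than writing it down directly, and relegates the two lower bounds to an appendix.
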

\begin{proof}
Here, we only prove that the above values are sufficient. For the corresponding necessary conditions (and $\epsilon$-approximate versions of the above) see Appendix~\ref{sec:necessary}.
Eq.~\eqref{eq:optimalquantum} follows from combining~\eqref{eq:decomp_channel} with the dephasing construction in Lemma~\ref{lem:dephasing_basic}. 
To see~\eqref{eq:optimalclassical}, consider the {unitary
\begin{align}
	V= \sum_{i=1}^d \proj{i}_S \otimes X^i_R,
 \end{align}
 where $X$ is the generalized Pauli matrix defined as 
 \begin{align}\label{X}
 	X \ket{i} = \ket{(i + 1) \text{ mod } d}.
  \end{align} 
 As shown in Ref.~\cite{Scharlau2016Quantum}, this unitary implements the dephasing map
 \begin{align}
	 \tr_R(V (\rho \otimes \id_d )V^\dagger) &= \frac{1}{d} \sum_{i,j}\bra{i}\rho\ket{j}\ketbra{i}{j} \tr(X^{i-j}) &=  \pi_A(\rho).
 \end{align}
 $V$ is the local Fourier transform of a unitary leading to a channel of the form~\eqref{eq:def:classicalnoise}: 
 there exists a unitary $F$ and a basis $\{\ket{ \tilde j}=F^\dagger \ket{j}\}$ such that
 \begin{align}
	 \tilde{V}\coloneqq (\mathbbm{1} \otimes F) V (\mathbbm{1}\otimes F^\dagger) =\sum_{j=1}^d  Z^j \otimes \proj{\tilde{j}}.
 \end{align}
Here, 
 \begin{align}\label{Z}
Z=\sum_j \omega_d^j \proj{j} 
 \end{align}
 is the generalized Pauli matrix conjugate to $X$ and $\omega_d$ the $d$-th root of unity.
Since the maximally mixed state is unitarily invariant, $\tilde{V}$ implements the dephasing map and its action on the system $S$ can be represented as
\begin{align}
	\rho \mapsto \tr_R(\tilde V (\rho\otimes \id_d) {\tilde V}^\dagger) = \frac{1}{d}\sum_{j=1}^d Z^j \rho Z^{-j}.
\end{align}
Thus the dephasing map can be implemented with a classical SoR of dimension $d$. }
\end{proof}
This lemma proves a conjecture in Ref.~\cite{Scharlau2016Quantum}, where the possibility of strengthening their bound $m^*_Q(d)=d$ to the present one was already raised.

In complete analogy to the discussion in Section~\ref{sub:catalyticity} and Fig.~\ref{fig:dephasing}, we can also use the catalytic properties of the source of randomness to implement state transitions locally from an initially uncorrelated state and using a fixed-size source of randomness. More concretely, let $\{\rho^i\}_{i=1}^N$ and $\{\sigma^i\}_{i=1}^N$ be $d$-dimensional quantum states such that $\rho^i \succ \sigma^i$ for all $i=1,\ldots,N$. Then there exists a unitary $U$ such that
\begin{align}\label{eq:local_transition}
\tr_R[U (\rho^1_{S_1} \otimes \cdots \otimes  \rho^N_{S_N}\otimes \id_{\lceil d^{1/2}\rceil}) \: U^\dagger]=  \rho'_{S_1,\ldots,S_N},
\end{align}
with $\rho'_{S_i} = \sigma^i$. To see this, we recall from the discussion in section~\ref{sub:catalyticity} that the transition $\rho^i \rightarrow \sigma^i$ can be implemented composing unitary channels and dephasing maps. Hence, $\mc{E}(\rho^{1}_{S_1} \otimes \cdots \otimes \rho^{1}_{S_1} ) = \sigma^{1}_{S_1} \otimes \cdots \otimes \sigma^{1}_{S_1}$ with
\begin{align}\label{eq:local_shur_horn}
\mc{E} = \bigotimes_{i=1}^N \mc{U'}_{S_i} \circ \bigotimes_{i=1}^N\pi_{A_1}  \circ \bigotimes_{i=1}^N\mc{U}_{S_1}
.
\end{align}
Now, using Eq.\ 
\eqref{eq:local_dephasing} we see that it is possible to dephase locally ---that is, perform locally the same transition as the one implemented by the second map on the r.h.s. of \eqref{eq:local_shur_horn}--- using a single source of randomness of dimension $\lceil d^{1/2}\rceil$, at the cost of creating correlations between the subsystems. Hence, composing the local unitaries with the local dephasing of \eqref{eq:local_dephasing}, we obtain a map that locally implements the same transition as $\mc{E}$, as captured by \eqref{eq:local_transition}.

\subsection{Smallest possible decohering environment and measurement device} 
\label{ssub:measurements}
A further application of our results is to the physical mechanism of decoherence and implementing a measurement in quantum mechanics, which can indeed be seen as a special case of a noisy operation, since it requires randomness. Both applications follow from the fact that a quantum source of randomness can be seen as half of a maximally entangled system.  

It is useful to first discuss decoherence. 
To do so, we make use of the fact that the usual decoherence mechanism is, in a sense, simply a purified version of the system-environment interactions that are toy-modelled by noisy operations.
Let $\ket{\psi} \in \mc H_S$ be an initial state vector of a $d$-dimensional system and $\ket{\phi}$ be the initial state 
vector of the environment. 
According to the decoherence mechanism, the unitary joint evolution of system and environment is generated by a Hamiltonian whose interaction term picks out, or einselects, a preferred basis in which it decoheres the system~\cite{Zurek2003}. 
We are now interested in the smallest possible size of the environment that achieves this. 
Let us label the system basis that is einselected by $A = \{ \ket{i}\}$ and assume that $\ket{\phi}$ is a maximally entangled $d$-dimensional and bi-partite state vector over systems $E_1$ and $E_2$.
We then define the unitary
\begin{align} \label{eq:unitary_deco}
    U = U_{SE_1} \otimes {\mathbbm 1}_{E_2},
\end{align}
where $U_{SE_1}$ is the unitary defined in~\eqref{eq:dephasing_unitary_basic} that acts on systems $S$ and $E_1$. 
As is clear from the above, this unitary will have the effect that 
\begin{align}
    \tr_E[U \proj{\psi} \otimes \proj{\phi} U^\dagger] = \pi_A(\proj{\psi}), 
\end{align}
meaning that even in this purified picture only an environment of the size of the system is required to produce decoherence.

Let us now turn to the smallest possible measurement device.
For simplicity, we only consider projective measurement schemes: 
Suppose we are given a system in some initial state vector $\ket{\psi}$ and some set of projective measurement operators $\{M_i =
\proj{i}\}, i \in \{1, \dots, d\}$.
Then a measurement process consists of the following steps: 
A bi-partite measurement device, initially in state vector $\ket{\phi}$, consisting of a $d$-dimensional pointer system $P$ and a remainder $R$, whose dimension we are interested in bounding; 
and a unitary $W$ with the effect that 
\begin{align} \label{eq:measurement_scheme}
    Tr_R[W \proj{\psi} \otimes \proj{\phi} W^\dagger] = \sum_i p_i \proj{i, P_i}, 
\end{align}
where $p_i = \Tr(M_i \proj{\psi})$ and $\{\ket{P_i}\}$ form an orthonormal basis for the pointer system. Using the above results, we can
easily construct a measurement process as follows: Let the initial state vector of the
measurement device be $\ket{\phi} = \ket{0}_P \otimes \ket{\phi^+}_R$, where
$\ket{\phi^+}$ is a bi-partite, $d$-dimensional, maximally entangled state vector.
Further, let $\{V_i\}$ be unitaries defined by the action
\begin{align}
    V_i \ket{i, 0} = \ket{i,P_i}.
\end{align}
Finally, define the unitary 
\begin{align}
	W = \sum_i \proj{i}\otimes V_i  \otimes (U_i)_{R_1} \otimes \mathbbm{1}_{R_2},
\end{align}
where the unitaries $U_i$ form an operator basis as before. 
Then, it is easy to verify that $\ket{\phi}$ and $W$ together satisfy~\eqref{eq:measurement_scheme}. 
This shows that in principle one requires a measurement device (including the pointer variable) whose size is only twice that of the system to be measured to implement a projective measurement as a physical process. 
Using entropic arguments one can again show that this is also the smallest possible measurement device. 
Note that the register $R$ is exclusively used as a source of randomness in this protocol. 
Thus if we are willing to give up the assumption that the initial state of the measurement-device is pure, then it suffices to only keep part $R_1$ in a maximally mixed state. 
Clearly, these results can also be read as providing the minimal dimension of an environment
that equilibrates a quantum system of dimension $d$~\cite{LongReview,Linden2012}.

\subsection{A universal dephasing machine}
\label{sub:universaldecoherence}

In Section~\ref{sec:an_optimal_dephasing_map} we have shown that with the aid of a noise system $R$ in state $\id_{\lceil d^{1/2} \rceil}$ it is possible to perform a protocol $U$ which has the effect of implementing the dephasing map $\pi_A$ on the system $S$. 
We will now investigate which map is induced on $S$ if the same unitary is applied with a system $R$ in a state $\sigma$ different from $\id_{\lceil d^{1/2}\rceil}$. 
We will show that $U$ brings the system closer to $\pi_A(\rho)$ for any initial states $\rho$ and $\sigma$. 
Also, we find that iterating the same protocol $U$ with a sufficiently large sequence of imperfect noise states of $R$ brings the system $S$ exponentially close (in the number of iterations) to its dephased state. 
In this sense, $U$ acts as a universal dephasing machine (Fig.~\ref{fig:machine1} and Fig.~\ref{fig:machine}): 
an iterated use of the same protocol $U$ dephases the state of $S$ for large families of states on $R$ acting as a SoR. 
Hence one can implement this protocol universally as a ``black box'', without having to know the actual state of $R$.

\subsubsection{Imperfect noise and convergence to the dephased state}\label{sec:convergence}

Let $ \mathcal{D}_\sigma(\cdot)$ denote the map
\begin{align} \label{eq:dephasing_map}
\mathcal{D}_\sigma(\cdot) \coloneqq \Tr_R[U \: (\cdot \: \otimes \: \sigma )\: U^\dagger] 
\end{align}
where $U$ is the unitary of Lemma~\ref{lem:dephasing_basic}. In Appendix~\ref{app:universal_dephasing_machine} we show that, for any $\rho$ and $\sigma$, 
\begin{align}
    \mathcal{D}_\sigma(\pi(\rho)) = \pi(\mathcal{D}_\sigma(\rho)) = \pi(\rho),\\
    \norm{\mathcal{D}_\sigma(\rho) - \pi(\rho)}_1 \leq \norm{\sigma - \id_{\lceil d^{1/2} \rceil}}_1, 
\end{align}
where we have dropped the subscript $A$. 
These properties imply that, independently of the actual state $\sigma$, the system $S$ is brought closer to the dephased state $\pi(\rho)$ while keeping its diagonal invariant.
This follows from the data-processing inequality~\cite{Nielsen2000}
\begin{align}
	\norm{\mathcal{D}_\sigma(\rho) - \pi(\rho)}_1 = \norm{\mathcal{D}_\sigma(\rho) - \mathcal{D}_\sigma(\pi(\rho))}_1 \leq \norm{\rho - \pi(\rho)}_1.\nonumber
\end{align}

Using those properties, one can show that by repeating the process sequentially (see Fig.~\ref{fig:machine1} (top)) the system is eventually dephased for large classes of states $\sigma$. 
In fact, one can show that (see again Appendix~\ref{app:universal_dephasing_machine})
\begin{align}\label{eq:convergence_system}
  \norm{\mathcal{D}^n_\sigma(\rho) - \pi(\rho)}_1 \leq \norm{\sigma - \id_{\lceil d^{1/2} \rceil}}_1^n ,
\end{align}
where $\mathcal{D}^n_\sigma(\rho)$ denotes the repeated application of $\mathcal{D}_\sigma$. This means that, given $\sigma$ such that $\norm{\sigma - \id_{\lceil d^{1/2} \rceil}}_1<1$, the dephased state is approached exponentially fast.
Note that another corollary of the above properties is that the map $\mc{D}_{\sigma}$ can only increase the von Neumann entropy of its input, which is formally proven in Appendix \ref{app:robustness_noise}.

\begin{figure}[!t]
\includegraphics[width=0.4\textwidth]{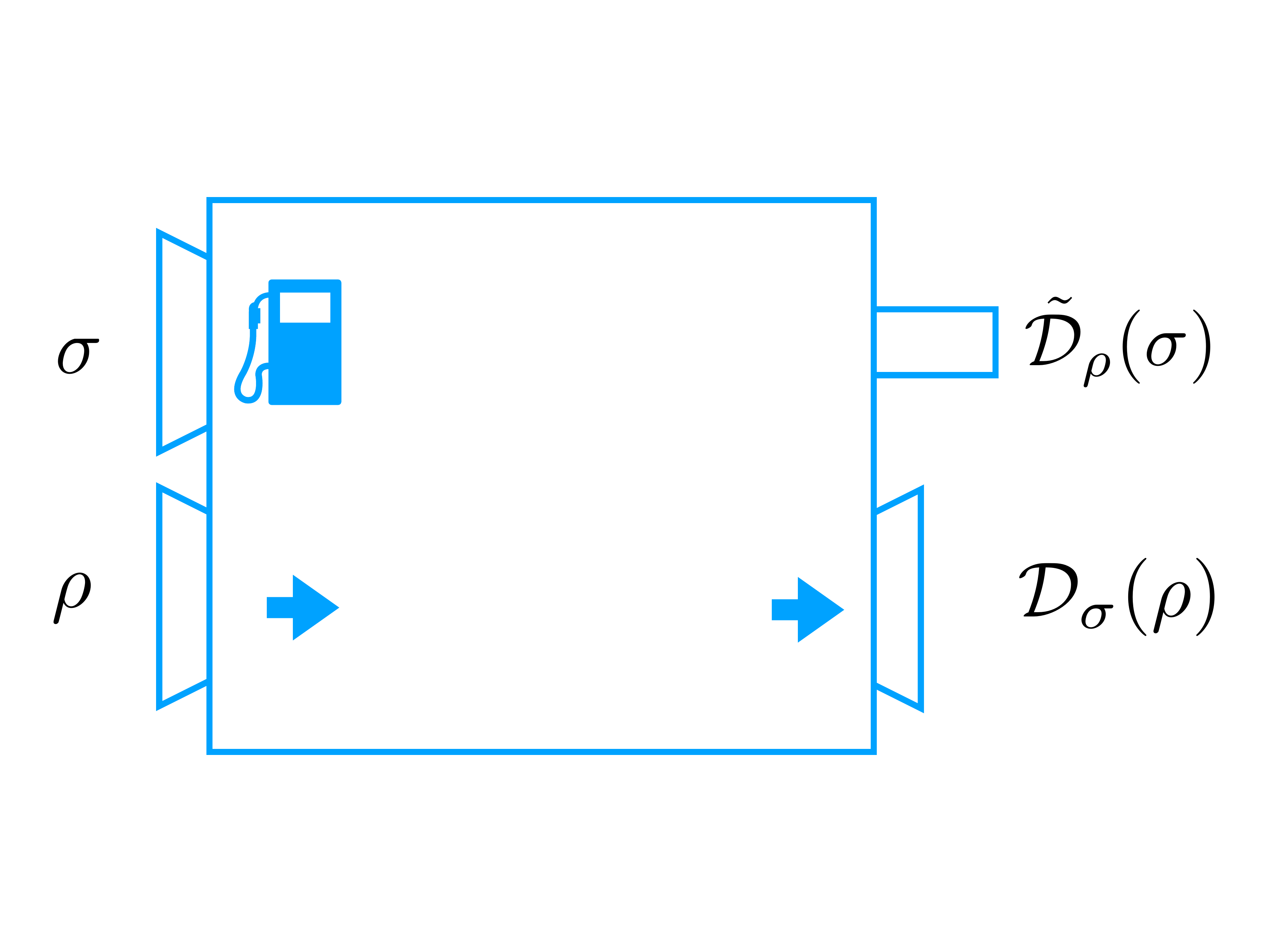}
\caption{Single instance of ``universal dephasing machine''. We interpret the process $\rho \otimes \sigma \to U(\rho \otimes \sigma)U^\dagger$ as a dephasing machine that takes the state $\sigma$ as fuel and transfers the input state $\rho$ into the output state $\mathcal{D}_\sigma(\rho)$ and ``waste'' $\tilde{\mathcal{D}}_\rho(\sigma)$.}
\label{fig:machine1}
\end{figure}

\subsubsection{Reusing the randomness} 
\label{ssub:reusing_the_randomness}

\begin{figure*}[!t]
\includegraphics[width=0.8\textwidth]{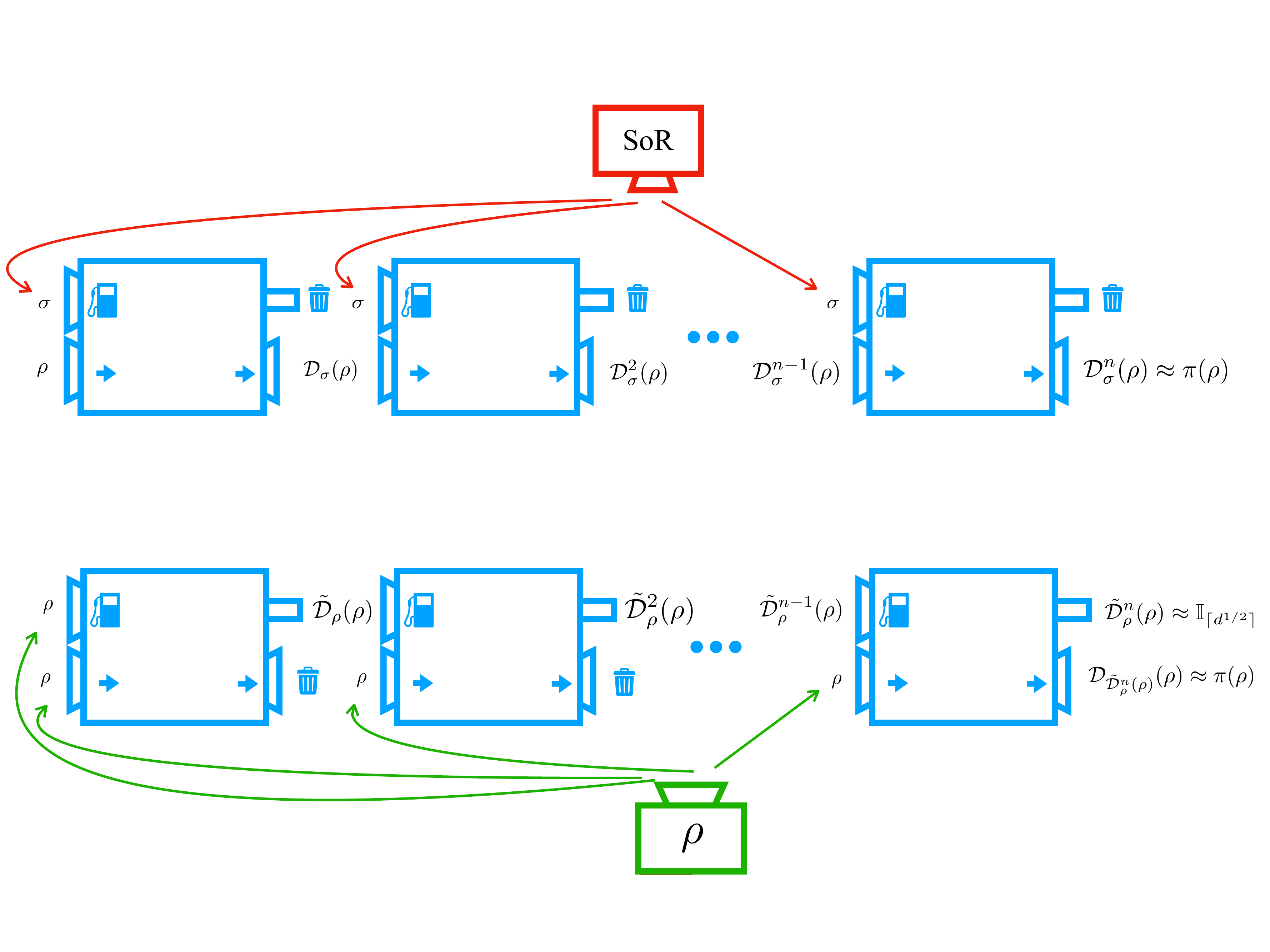}
\caption{Top: Repeated application on single input state approximates dephasing map. Bottom: Producing the dephased state when there is no SoR. If $\norm{\rho-\id_d}_1 < 1$, then the necessary amount of randomness for dephasing can be distilled by repeated application of the universal dephasing machine.}
\label{fig:machine}
\end{figure*}

In the case of $R$ being in the state $\id_{\lceil d^{1/2} \rceil}$, we have shown in Section~\ref{sub:catalyticity} that it remains unchanged and thus, the noise is re-usable. 
A natural question is then what happens to the state of $R$ when it is in an arbitrary state $\sigma$. 
Let $\tilde{\mathcal{D}}_\rho$ denote the map
\begin{align}
\tilde{\mathcal{D}}_\rho(\cdot) \coloneqq \Tr_R[U \: (\rho \: \otimes \: \cdot \:) U^\dagger].
\end{align}
It follows simply from Eq.~\eqref{eq:map_on_noise} that $\tilde{\mathcal{D}}_\rho$ is just a mixture of unitaries, hence bringing $R$ closer to the maximally mixed state. 
Indeed, following arguments analogous to the ones of Section~\ref{sec:convergence} (see Appendix~\ref{app:universal_dephasing_machine}) one can show that there exist choices for the unitary operator basis of Lemma~\ref{lem:dephasing_basic} so that the final state of $R$ fulfills 
\begin{align}
    \norm{\tilde{\mathcal{D}}_\rho(\sigma) - \id_{\lceil d^{1/2} \rceil}}_1 \leq \norm{\rho - \id_d},
\end{align}
and analogously it converges as
\begin{align}\label{eq:convergencenoise}
  \norm{\tilde{\mathcal{D}}^n_\rho(\sigma) - \id_{\lceil d^{1/2} \rceil}}_1 \leq \norm{\rho - \id_d}_1^n.
\end{align} 
Altogether we conclude not only that the noise can be re-used, but furthermore, that it improves its quality converging exponentially fast to a state of perfect noise, provided that the initial state $\rho$ is mixed enough to start with (as given by the condition $\norm{\rho - \id_d}_1<1$). 
The fact that the noise system is brought closer to the maximally mixed state allows one to implement a distillation protocol such as the one depicted in Fig.~\ref{fig:machine} (bottom). 
There, one has a single source providing copies of a given initial state $\rho$. 
One aims at dephasing each subsystem locally, similarly to what is done with a perfect noise system in Eq.~\eqref{eq:local_dephasing}. 
Here, one can take one copy $\rho$ playing the role of $R$ for some iterations until it is brought close enough to the maximally mixed state, which will happen exponentially quickly, given~\eqref{eq:convergencenoise}. 
Then, using Eq.~\eqref{eq:convergence_system} one can ensure that all the new copies of $\rho$ can be locally dephased.

\subsubsection{Time control for the dephasing machine and recurrence} 
\label{ssub:time_control_for_the_dephasing_machine}

So far we have left unspecified how the dephasing of the machine would physically be implemented. 
One concern here may be that the dephasing properties heavily rely on very precise time control of the evolution under the associated Hamiltonian $H = \mathrm{i} \log(U)$. 
However, the numerical simulations depicted in Fig.~\ref{fig:robust_simulation} strongly indicate that, as the system dimension becomes large, $H$ produces an evolution that is close to $\mc D_\sigma(\cdot)$ for a time-span that scales exponentially with the size of $S$. Indeed, for prime power dimensions and the case $\sigma = \id_{\lceil d^{1/2} \rceil}$, we find analytically that integer iterations of the application of the dephasing unitary always yield the exact dephasing map, up to a recurrence point, at which the original state is returned. See Appendix~\ref{app:timing} for details. The numerical simulations above complement this and suggest that this recurrence property holds not only for integer iterations of the application of the dephasing unitary, but also for intermediate times. 

We hence expect that in the limit of very large dimensions, this \emph{equilibrating} behavior~\cite{LongReview,Linden2012}
becomes arbitrarily good and the state $\rho(t)$ remains close to the equilibrium state $\pi(\rho)$ for a time exponential in the system size. This means that the universal dephasing machine can be made robust in time, in the
 sense that it does not require exact control over the timing and the dephasing
 is maintained for long time scales.  

 \begin{figure}[!tbp]
 	\includegraphics[width=0.4\textwidth]{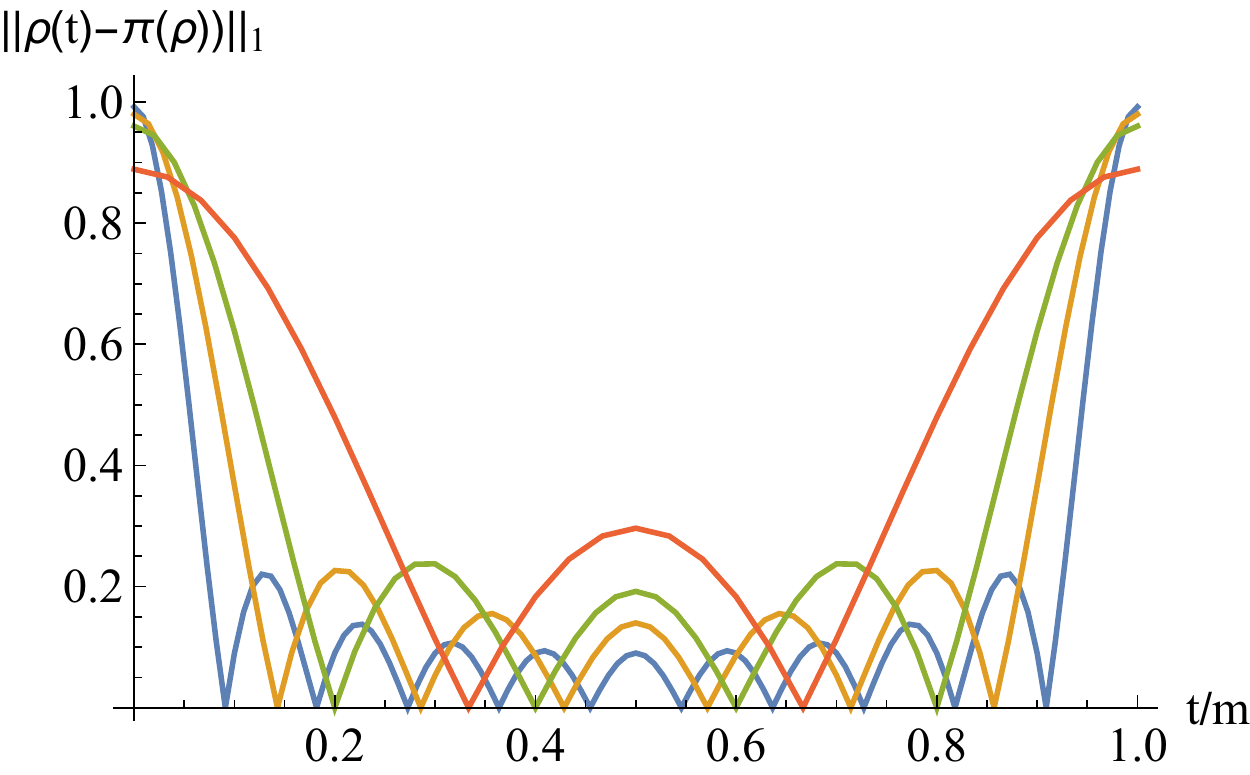}
 	\caption{Numerical simulations of the dephasing map that is induced by
 	the noisy operation~\eqref{eq:dephasing_unitary_basic} for continuous
	 time and system dimensions $d=m^2=9,25,49,121$ (red,green,yellow,blue).
 Shown is the trace-norm distance between the time evolved state $\rho(t)$ and the
	 pinched state $\pi(\rho)$ as a function of rescaled time $t/m$. The initial state is a maximally coherent state $\frac{1}{\sqrt{d}}\sum_i \ket{i}$. The graph shows that, while for integer times (with
 respect to the dimension of the environment) the dephasing is always exact, for non-integer times
	 the deviation from exact dephasing becomes small with increasing dimension. The numerically obtained deviation at $t/m=0.5$ seems compatible with a scaling as $1/m=1/\sqrt{d}$, but we leave open to derive the exact scaling behaviour.}
 	\label{fig:robust_simulation}
\end{figure}

\subsection{An entanglement-assisted private quantum channel} 
\label{sec:a_modified_private_quantum_channel}

In this section, we apply our results to the construction of a cryptographic protocol known as private quantum channel (PQC). 
In a PQC-setting, two parties, Alice and Bob, would like to communicate quantum data privately, that is, without an eavesdropper being able to intercept and retrieve the data. 
To achieve this they share a secret key.
We will now first briefly explain PQCs using classical secret keys and then provide a construction where the classical key $k$ is substituted for a "quantum key" in the form of a minimal number of entangled bits. 
In the following, we denote by $\mc S(\mc H)$ the set of normalized quantum states on the Hilbert space $\mc H$. 
Formally, in the classical-key setting, a $(\delta, \epsilon)$-PQC is a set of pairs of encoding and decoding CPTP-maps $\mc X_k: \mc S(\mc H_A) \to \mc S(\mc H_{A'})$ and $\mc Y_k: \mc S(\mc H_{A'}) \to \mc S(\mc H_A)$ that can be locally implemented by the sending and receiving parties respectively, where $k$ denotes the secret key that is shared by Alice and Bob. 
We think of the key $k$ as a random variable and assume that the key $k$ occurs with probability $p(k)$. 
These channels then have to fulfill the following conditions \cite{Ambainis2009}. 
Firstly, there exists a fixed element $\tau \in \mc S(\mc H_A')$, such that 
\begin{align} \label{eq:pqc_reliability}
    \sup_{\rho_{A,B} \: \in \: \mc S(\mc H_A \otimes \mc H_B)} \norm{ \left(\sum_k p_k\mc X_k \otimes \text{id}\right)(\rho_{A,B}) - \tau \otimes \rho_B}_1 \leq \epsilon,
\end{align}
where $\rho_{A,B}$ is any extension of the input state $\rho_A$ to a larger Hilbert space and $\rho_B = \tr_A(\rho_{A,B})$. And secondly,
\begin{align} \label{eq:pqc_security}
    \sup_{\rho \: \in \: \mc S(\mc H_A)} \norm{\sum_k p_k \mc Y_k \circ \mc X_k (\rho) - \rho}_1 \leq \delta.
\end{align}
Eq.~\eqref{eq:pqc_reliability} warrants (approximate) security from eavesdropping, while~\eqref{eq:pqc_security} warrants the channel's (approximate) reliability. 
The reason that the security is defined over all possible extensions is that the eavesdropper may initially be entangled with part of the unencrypted message. 
Finally, a $(0,0)$-PQC is called an \emph{ideal} PQC.

PQCs have been well-studied for the case in which Alice and Bob share a classical key~\cite{Boykin2003Optimal,Ambainis2000,Hayden2004Randomizinga,Ambainis2009,Portmann2017,Hayden2016Universal}. 
In this case{, and if $\mc X_k$ is unitary,} the encoding corresponds to a classical noisy process and a key of length at least $(2-O(\epsilon))n$ is necessary for the $\epsilon$-secure transmission of $n$ qubits~\cite{Ambainis2009,Boykin2003Optimal,Ambainis2000,Note1}.

Here, in contrast, we consider a setting in which Alice and Bob share a ``quantum key'' in the form of entangled quantum states.  
We use our dephasing map to construct an ideal private quantum channel that requires $n$ shared ebits of entanglement to transmit $n$ qubits of quantum data. 
As with the dephasing map, this value can again be shown to be optimal, in the sense that no implementation of an ideal PQC as a noisy operation can require fewer ebits (a result that extends to approximately ideal PQCs).
It improves on the only other discussion of PQCs that uses entanglement known to the authors, in Ref.\ \cite{Leung2002Quantum}. 
There, an ideal PQC is constructed that applies techniques from classical PQCs and hence achieves only ``classical'' efficiency by requiring $2n$ ebits for $n$ transmitted qubits.

\begin{figure}[!t]
\includegraphics[width=0.28\textwidth]{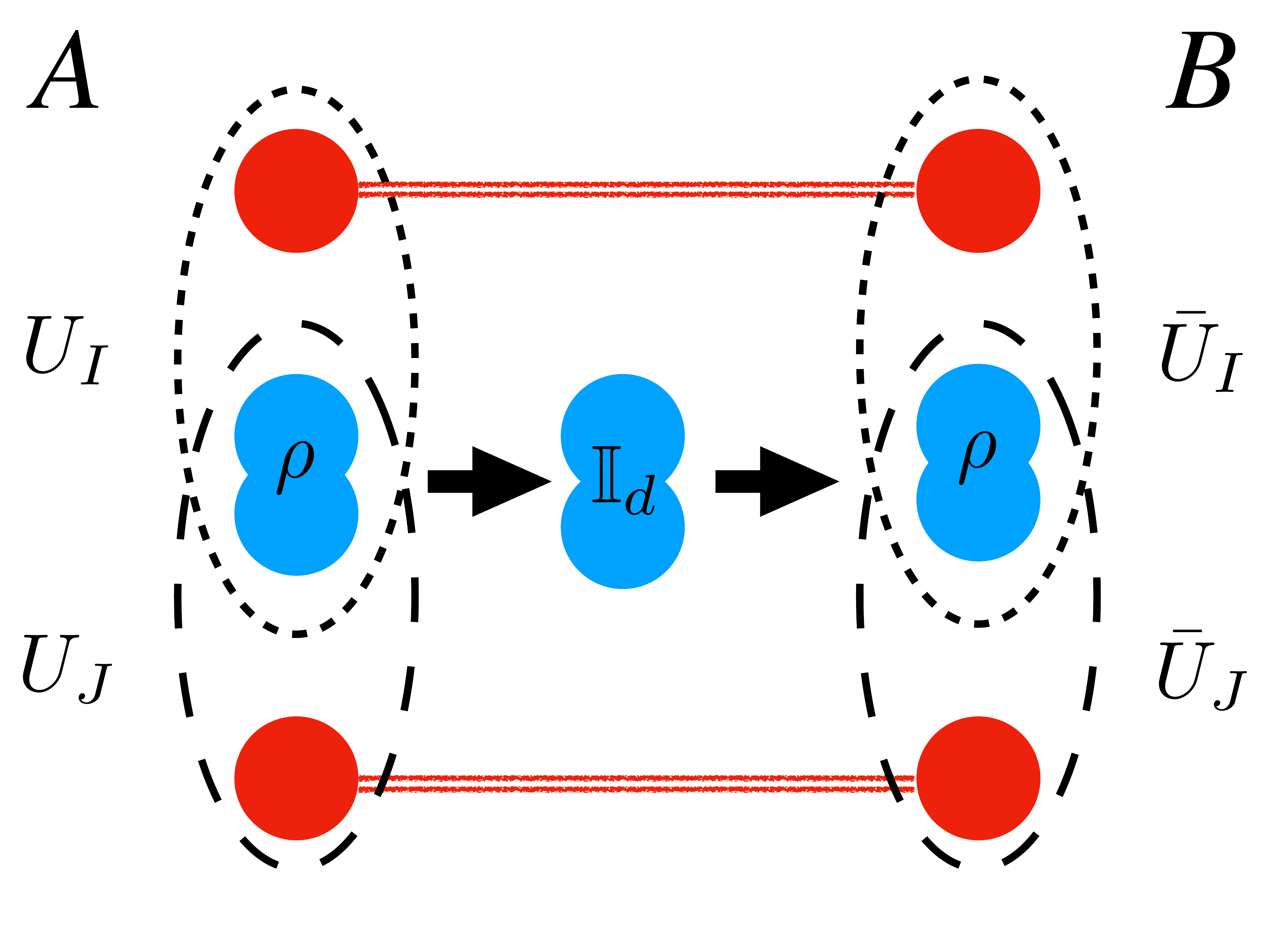}
\caption{Illustration of our quantum PQC for the case $n=2$. To encode a two-qubit state $\rho$ (blue), Alice applies the dephasing unitaries $U_I$ and $U_J$ to the system and one half of an ebit (red) each, where $I$ and $J$ can be any mutually unbiased bases. This maps $\rho$ into the maximally mixed state exactly, so that an eavesdropper cannot learn anything about $\rho$ even if she was initially entangled with part of it. Bob, in order to decode, applies the conjugate of the two above unitaries and thereby retrieves the state exactly.}
\label{fig:pqc}
\end{figure}

The idea behind our construction is straightforward (see Fig.~\ref{fig:pqc}). Given an $n$-qubit system $S$, let $U_I$ and $U_J$ denote the dephasing unitaries~\eqref{eq:dephasing_unitary_basic} whose projective part corresponds to the two orthonormal bases $I = \{\ket{i}\}_{i=1}^d$ and $J = \{\ket{j}\}_{j=1}^d$ for $\mc H_S$. If Alice and Bob share $n$ ebits, and assuming for convenience that $n$ is even, Alice can split the ebits into two halves, which we call $E_1$ and $E_2$. She then applies $U_I$ to $S$ and her local share of $E_1$, followed by applying $U_J$ to $S$ and her half of $E_2$. It is easy to check that if $I$ and $J$ are mutually unbiased, that is, if 
\begin{align}
    |\braket{i}{j}|^2 = \frac{1}{d}, \quad \forall i,j,
\end{align}
then this results in the completely depolarizing channel. That is, the map
\begin{align}
    \mc X(\cdot) \coloneqq \tr_{E}(U_J U_I( \cdot \otimes \:\proj{\phi^+}_{E_1} \otimes \proj{\phi^+}_{E_2}) U^\dagger_I U^\dagger_J),
\end{align}
where $\ket{\phi^+}$ represents an $n/2$-ebit state vector, has the property that 
\begin{align}
    \mc X(\rho) = \id_d, \quad \forall \rho \in \mc D(\mc H_S).
\end{align}
This ensures perfect secrecy, since the completely depolarizing channel necessarily also removes all correlations to other systems \cite{Hayden2004Randomizinga}.
Upon receipt of $S$, Bob can then apply the complex conjugate of the encoding unitaries to his share of the ebits to retrieve the original state. See Appendix~\ref{app:pqc} for the formal proofs. 

This construction has a number of interesting features, some of which, however, are already present in the construction of Ref.\ \cite{Leung2002Quantum}. For instance, it is catalytic in the sense that, at the end of the transmission process, in case no eavesdropper has interacted with the sent data, all of the entanglement is returned in its initial state and can be reused for future rounds of transmission. Moreover, the scheme allows for error correction, efficient authentication and recycling of some of the entanglement in case eavesdropping has occurred. We refer the reader to Appendix~\ref{app:pqc} for a discussion of these properties.

\section{Dephasing with quantum expanders}\label{sec:expandergraphs}

The protocol presented in Lemma~\ref{lem:dephasing_basic} allows one to dephase perfectly a $d$-dimensional system
given a SoR of dimension of $m=\lceil d^{1/2}\rceil$. This very same protocol, when applied to an imperfect SoR of dimension $m$ but not in the maximally mixed state, yields, as shown in Section~\ref{sec:convergence}, a convergence to the dephased state when the protocol is iterated. 
In this section we study a complementary protocol that provides fast convergence when we have states of the SoR that are maximally mixed, but of dimension significantly smaller than $m$. We find a protocol that yields an exponential convergence to the dephased state with the system size (i.e., the logarithm of Hilbert space dimension) of the SoR, measured in the $2$-norm. This is remarkable, in that it shows that one can obtain an equilibration in $2$-norm exponentially quickly in the ancillary system size. This insight may be seen as being at odds with the intuition that an equilibrating environment should naturally have a large physical dimension.
Our approach is based on a machinery of \emph{quantum expanders} 
\cite{PhysRevA.76.032315,Margulis,HarrowExpander}. The key insight is that one can trade 
residual correlations still present in the system with the dimension required for the mixing environment.

\begin{theorem}[Dephasing with quantum expanders]\label{thm:expanders} For any 
$d$-dimensional state, $d=e^2$ with $d$ odd, and an integer $k$, 
	there exists an $8^k$-dimensional
	quantum system $R$ and a unitary $U\in \mathrm{U}(d 8^k)$ such that
\begin{equation}
	\|{\rm tr}_R \left(U(\rho \otimes \id_{8^k}) U^\dagger\right) - \pi(\rho)\|_2 \leq  
	\sqrt{2 {d^3}} \left(
	 \frac{ 5 \sqrt{2}}{8}\right)^k.
\end{equation}
\end{theorem}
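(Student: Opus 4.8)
The plan is to reduce the approximate–pinching bound to a spectral-gap estimate for an explicit quantum expander, and then to boost the accuracy by iterating the expander $k$ times, which is precisely what inflates the ancilla dimension to $8^k$.

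First I would set up the dilation so that the diagonal of $\rho$ is preserved \emph{exactly}, isolating the entire error in the off-diagonal (coherence) sector. Concretely, I would use the controlled structure of Lemma~\ref{lem:dephasing_basic}, $U=\sum_m \proj{m}_S\otimes (U_m)_R$, so that tracing out $R$ in the maximally mixed state leaves each matrix element $\bra{m}\rho\ket{n}$ multiplied by the overlap $c_{mn}=\tfrac{1}{\dim R}\Tr(U_m U_n^\dagger)$, with $c_{mm}=1$. Thus $\Tr_R[U(\rho\otimes\id_{8^k})U^\dagger]-\pi(\rho)=\sum_{m\ne n}\bra{m}\rho\ket{n}\,c_{mn}\,\ketbra{m}{n}$, and its $2$-norm is governed entirely by the largest off-diagonal overlap $\max_{m\ne n}|c_{mn}|$. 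The task therefore becomes to exhibit a family of ancilla unitaries whose pairwise Hilbert--Schmidt overlaps are small, and to identify this near-orthogonality with a quantum-expander spectral gap.

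Second --- and this is the technical heart --- I would construct the explicit expander. Using the discrete Weyl operators on $\CC^e$ with $e=\sqrt d$ odd (oddness makes the Weyl--Heisenberg group over $\mathbb{Z}_d$ well behaved), I would select $D=8$ Weyl operators $g_1,\dots,g_8$ and form the channel $\Phi(\cdot)=\tfrac18\sum_i g_i\cdot g_i^\dagger$. The statement to prove is that its second-largest singular value, $\|\Phi-\mathcal T\|_{2\to2}$ on traceless operators (with $\mathcal T$ the projection onto the maximally mixed direction), is at most $\lambda=5\sqrt2/8<1$. This is a finite character sum estimate: conjugation by a Weyl operator acts diagonally on the Weyl basis with root-of-unity phases, so $\Phi-\mathcal T$ is diagonal in that basis and its singular values are the moduli $|\tfrac18\sum_i\chi_i(\alpha)|$ over the nontrivial Weyl labels $\alpha$; bounding all of these by $5\sqrt2/8$ is the crux.

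Third, I would iterate. Since each application of $\Phi$ contracts every traceless operator in $2$-norm by $\lambda$, the $k$-fold composition contracts by $\lambda^k$, and its Stinespring dilation uses an ancilla of dimension $D^k=8^k$; the resulting depth-$k$ expander words play the role of the $U_m$ and inherit overlaps of order $\lambda^k$, so the first step yields the geometric decay $(5\sqrt2/8)^k$. The polynomial prefactor $\sqrt{2d^3}$ then comes from bookkeeping: bounding $\sum_{m\ne n}|\bra{m}\rho\ket{n}|^2\le\|\rho\|_2^2\le1$, together with the dimensional factors incurred when the expander's Hilbert--Schmidt contraction on the $d^2$-dimensional operator space is converted back to a bound on the system's $2$-norm across the $d$ coherence sectors. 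I expect two main obstacles. The first is the \emph{dimension-independent} gap $\lambda=5\sqrt2/8$ for a constant-degree expander, which is exactly where the non-abelian (genuinely quantum) nature is essential: the analogous diagonal/abelian construction mixing only powers of $Z$ provably fails, since for large $d$ any fixed finite set of phases produces a character sum arbitrarily close to $1$. The second is guaranteeing that the invariant sector of the construction is precisely the diagonal algebra, so that the output converges to $\pi(\rho)$ rather than to the maximally mixed state; the controlled, off-diagonal-scaling structure is what enforces this and must be checked to interlock correctly with the expander's contraction.
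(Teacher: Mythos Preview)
Your high-level plan (degree-$8$ expander, iterate $k$ times, dilate with an $8^k$-dimensional ancilla) is right, but the concrete construction is not coherent, and it is also not the paper's route. Two things go wrong. First, you set up the controlled-on-$S$ dilation of Lemma~\ref{lem:dephasing_basic} and say the ``depth-$k$ expander words play the role of the $U_m$'', but the spaces do not match: the $U_m$ must be $d$ unitaries on the $8^k$-dimensional ancilla, whereas your words are $8^k$ unitaries on $\CC^e$. Second, and more seriously, you propose to take the eight generators $g_i$ to be Weyl operators and then compute the gap as a character sum. But conjugation by Weyl operators is diagonal in the Weyl basis, so your $\Phi$ is an \emph{abelian} Cayley walk on $\ZZ_e^2$; you yourself note two paragraphs later that such walks cannot have a dimension-independent gap. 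The number $5\sqrt{2}/8$ is the Margulis gap, and it comes from the $SL_2$ part of the affine maps $v\mapsto T_1 v,\ T_2 v,\ T_1 v+e_1,\ T_2 v+e_2$, not from translations alone --- in Hilbert space this means the $g_i$ must include genuine Clifford (metaplectic) unitaries, not just Weyl shifts.

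The paper's argument is structurally different from yours. It does \emph{not} use the controlled-on-$S$ architecture at all; it builds a random unitary channel $\mc T(\rho)=\tfrac18\sum_j V_j\rho V_j^\dagger$ directly on $S$ and dilates $\mc T^k$ with an $8^k$-dimensional ancilla in the obvious (controlled-on-$R$) way. The analysis lives entirely on the discrete Wigner function: each $V_j$ implements one of the eight Margulis affine maps on phase space $\ZZ_d^2$, but applied \emph{columnwise}, i.e.\ each $p$-line $\{(p,q):p\in\ZZ_d\}$ at fixed $q$ is viewed as a vectorised $e\times e$ lattice (since $d=e^2$) on which the classical Margulis expander acts. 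This is the geometric idea you are missing: because the maps act within columns, the column weights $x_q=\sum_p W_\rho(p,q)=\bra{q}\rho\ket{q}$ are preserved, so the fixed point is $\pi(\rho)$ rather than $\id_d$. The $2$-norm bound then follows from $\|\rho-\sigma\|_2=\|W_\rho-W_\sigma\|_2$, the classical Margulis contraction $\lambda=5\sqrt{2}/8$ applied to each normalised column, and a crude worst-case bound $x_q\le d$ over the $d$ columns, which produces the $\sqrt{2d^3}$ prefactor.
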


The restriction to the dimension is done for pure conceptual simplicity. The argument for the proof, presented in Appendix~\ref{sec:expanders_app}, follows from a construction of a classical 
random walk that acts on the vertices of an expander graph, a \emph{Margulis expander}
\cite{MargulisExpander}. In the present construction, 
the vertices of the Margulis expander 
are seen as lines labeled by $q=1,\dots, d$ in a $d\times d$-dimensional
quantum phase space of the $d$-dimensional quantum system. The central insight is that 
classical random walks on such lattices are reflected by random walks on Wigner functions
defined on $d\times d$-dimensional phase spaces,
which in turn give rise to random unitary channels on quantum states in $d$ dimensions. The
construction laid out in detail in the appendix builds upon and draws inspiration from
the scheme of Ref.~\cite{Margulis}, but is in several important ways a new scheme, in particular in that 
each line in phase space is treated separately. In this way,
the strong mixing properties of the random walk of the Margulis expander graph is not used to
show rapid mixing to a maximally mixed state, but in fact to a quantum 
state with vanishing off-diagonal elements. 

\section{Summary and conclusions}

We have studied the problem of implementing state transitions under noisy processes, that is, processes that require randomness. We solve this problem completely by providing optimal protocols for both the case of an implicit, classical model of randomness as well as an explicit, quantum model of randomness. The main building block behind these protocols is the construction of a protocol that performs a dephasing map on an artibrary quantum state using a SoR of the smallest possible dimension, both for the quantum and classical case. We find that a quantum SoR is quadratically more efficient than its classical counterpart due to quantum correlations, and hence show that an explicit model is strictly more powerful for any dimension $d > 2$.

{Once the optimal protocols for dephasing were established, we studied applications such as state transitions in noisy operations, decoherence and quantum measurements, providing optimal protocols for all of them. An interesting feature of our protocol is that the SoR is not altered during the protocol, meaning that it can be re-used to implement further iterations of the above tasks.} 

{We have also extended our discussion to the case of imperfect noise and used our results to construct a universal dephasing machine that exhibits robustness both with respect to the noise that fuels it, as well as with respect to the control over timing when running it. Moreover, we have used our dephasing as a primitive to construct a novel, ideal private quantum channel. Finally, by putting it into the context of expander graphs, we have seen how 
such a dephasing is possible very economically, leading to an approximate dephasing in $2$-norm.}

{Besides the foundational interest of our construction, which makes precise the way in which the relationship between correlations and randomness in quantum mechanics differs from that in classical mechanics, we expect our dephasing protocol to improve bounds in noisy processes that we have not discussed here, to the extent that 
introduce a new primitive to constructions in quantum information. Given 
the pivotal status of randomness in protocols of quantum information processing and in
notions of quantum thermodynamics, these results promise a significant number of further
practical  
applications.} 

\emph{Acknowledgements:} P.~B.~thanks Lluis Masanes, Markus M\"{u}ller, Jon Richens and Ingo Roth for interesting conversations and especially Jonathan Oppenheim for suggesting cryptographic applications of the results. We acknowledge funding from the ERC (TAQ), the DFG 
(EI 519/14-1, CRC183), the Templeton Foundation, and the Studienstiftung des Deutschen Volkes. 
\bibliographystyle{unsrtnat}

\begin{thebibliography}{0}
\providecommand{\natexlab}[1]{#1}
\providecommand{\url}[1]{\texttt{#1}}
\expandafter\ifx\csname urlstyle\endcsname\relax
  \providecommand{\doi}[1]{doi: #1}\else
  \providecommand{\doi}{doi: \begingroup \urlstyle{rm}\Url}\fi

\end{thebibliography}


\begin{thebibliography}{38}
\providecommand{\natexlab}[1]{#1}
\providecommand{\url}[1]{\texttt{#1}}
\expandafter\ifx\csname urlstyle\endcsname\relax
  \providecommand{\doi}[1]{doi: #1}\else
  \providecommand{\doi}{doi: \begingroup \urlstyle{rm}\Url}\fi

\bibitem[Gour et~al.(2015)Gour, Mueller, Narasimhachar, Spekkens, and
  Yunger~Halpern]{Gour2015Resource}
G.~Gour, M.~P. Mueller, V.~Narasimhachar, R.~W. Spekkens, and
  N.~Yunger~Halpern.
\newblock The resource theory of informational nonequilibrium in
  thermodynamics.
\newblock \emph{Phys. Rep.}, 583:\penalty0 1--58, 2015.

\bibitem[Gogolin and Eisert(2016)]{LongReview}
C.~Gogolin and J.~Eisert.
\newblock Equilibration, thermalisation, and the emergence of statistical
  mechanics in closed quantum systems.
\newblock \emph{Rep. Prog. Phys.}, 9:\penalty0 056001, 2016.

\bibitem[Linden et~al.(2009)Linden, Popescu, Short, and Winter]{Linden2012}
N.~Linden, S.~Popescu, A.~J. Short, and A.~Winter.
\newblock Quantum mechanical evolution towards thermal equilibrium.
\newblock \emph{Phys. Rev. E}, 79:\penalty0 061103, 2009.

\bibitem[Zurek(2003)]{Zurek2003}
W.~H. Zurek.
\newblock Decoherence, einselection, and the quantum origins of the classical.
\newblock \emph{Rev. Mod. Phys.}, 75\penalty0 (3):\penalty0 715, 2003.

\bibitem[Joos et~al.(2003)Joos, Zeh, D., Kiefer, Giulini, W., Kupsch, and
  Stamatescu]{Zeh}
E.~Joos, Zeh, H.~D., C.~Kiefer, Giulini, D.~J. W., J.~Kupsch, and I.-O.
  Stamatescu.
\newblock \emph{Decoherence and the appearance of a classical world in quantum
  theory}.
\newblock Springer, Berlin, 2003.

\bibitem[Holevo(2001)]{HolevoBook}
A.~S. Holevo.
\newblock \emph{Statistical structure of quantum theory}.
\newblock Springer, Berlin, 2001.

\bibitem[Nielsen and Chuang(2000)]{Nielsen2000}
M.~A. Nielsen and I.~L. Chuang.
\newblock \emph{{Quantum computation and quantum information}}.
\newblock Cambridge University Press, 2000.

\bibitem[Ambainis et~al.(2000)Ambainis, Mosca, Tapp, and {De
  Wolf}]{Ambainis2000}
A.~Ambainis, M.~Mosca, A.~Tapp, and R.~{De Wolf}.
\newblock {Private Quantum Channels}.
\newblock \emph{Proc. 41ST Annu. Symp. Found. Comput. Sci.}, 2000.

\bibitem[Boykin and Roychowdhury(2003)]{Boykin2003Optimal}
P.~O. Boykin and V.~Roychowdhury.
\newblock Optimal encryption of quantum bits.
\newblock \emph{Phys. Rev. A}, 67:\penalty0 042317, 2003.

\bibitem[Mueller(2017)]{Mueller2017}
M.~P. Mueller.
\newblock Correlating thermal machines and the second law at the nanoscale.
\newblock \emph{arXiv:1707.03451}, 2017.

\bibitem[Marshall et~al.(2011)Marshall, Olkin, and
  Arnold]{Marshall2011Inequalities}
A.~W. Marshall, I.~Olkin, and B.~C. Arnold.
\newblock \emph{Inequalities : theory of majorization and its applications}.
\newblock Springer, 2011.

\bibitem[Horn(1954)]{Horn1954Doubly}
A.~Horn.
\newblock Doubly stochastic matrices and the diagonal of a rotation matrix.
\newblock \emph{Am. J. Math.}, 76:\penalty0 620, 1954.

\bibitem[Scharlau and Mueller(2018)]{Scharlau2016Quantum}
J.~Scharlau and M.~P. Mueller.
\newblock {Quantum Horn's lemma, finite heat baths, and the third law of
  thermodynamics}.
\newblock \emph{Quantum}, 2\penalty0 (54), 2018.

\bibitem[Schwinger(1960)]{Schwinger1960}
J.~Schwinger.
\newblock Unitary operator bases.
\newblock \emph{Proc. Natl. Ac. Sc.}, 46\penalty0 (4):\penalty0 570--579, 1960.

\bibitem[Werner(2001)]{Werner2001All}
R.~F. Werner.
\newblock All teleportation and dense coding schemes.
\newblock \emph{J. Phys. A}, 34:\penalty0 7081--7094, 2001.

\bibitem[Jonathan and Plenio(1999)]{JonathanPlenio}
D.~Jonathan and M.~B. Plenio.
\newblock Entanglement-assisted local manipulation of pure quantum states.
\newblock \emph{Phys. Rev. Lett.}, 83:\penalty0 1455, 1999.

\bibitem[Ng et~al.(2015)Ng, Mancinska, Cirstoiu, Eisert, and Wehner]{Ng14}
N.~H.~Y. Ng, L.~Mancinska, C.~Cirstoiu, J.~Eisert, and S.~Wehner.
\newblock Limits to catalysis in quantum thermodynamics.
\newblock \emph{New J. Phys.}, 17:\penalty0 085004, 2015.

\bibitem[Eisert and Wilkens(2000)]{PhysRevLett.85.437}
J.~Eisert and M.~Wilkens.
\newblock Catalysis of entanglement manipulation for mixed states.
\newblock \emph{Phys. Rev. Lett.}, 85:\penalty0 437--440, 2000.

\bibitem[Ambainis et~al.(2009)Ambainis, Bouda, and Winter]{Ambainis2009}
A.~Ambainis, J.~Bouda, and A.~Winter.
\newblock {Nonmalleable encryption of quantum information}.
\newblock \emph{J. Math. Phys.}, 50:\penalty0 042106, 2009.

\bibitem[Hayden et~al.(2004)Hayden, Leung, Shor, and
  Winter]{Hayden2004Randomizinga}
P.~Hayden, D.~Leung, P.~W. Shor, and A.~Winter.
\newblock Randomizing quantum states: Constructions and applications.
\newblock \emph{Commun. Math. Phys.}, 250:\penalty0 371--391, 2004.

\bibitem[Portmann(2017)]{Portmann2017}
C.~Portmann.
\newblock Quantum authentication with key recycling.
\newblock In \emph{Annual International Conference on the Theory and
  Applications of Cryptographic Techniques}, pages 339--368. Springer, 2017.

\bibitem[Hayden et~al.(2016)Hayden, Leung, and Mayers]{Hayden2016Universal}
P.~Hayden, D.~W. Leung, and D.~Mayers.
\newblock The universal composable security of quantum message authentication
  with key recyling.
\newblock \emph{presented at QCrypt 2011}, 2016.

\bibitem[Note1()]{Note1}
Note1.
\newblock If instead of the diamond norm a PQC's security is defined with
  respect to the weaker trace norm, then any unitary $1$-design provides an
  ideal channel and there exist both randomized~\cite {Hayden2004Randomizinga},
  and deterministic~\cite {Ambainis2004} constructions of PQCs that require
  only $n+O(\protect \qopname \relax o{log}(n))$ many bits of shared key.

\bibitem[Leung(2002)]{Leung2002Quantum}
D.~W. Leung.
\newblock {Quantum Vernam cipher}.
\newblock \emph{Quant. Inf. Comp.}, 2:\penalty0 14--34, 2002.

\bibitem[Hastings(2007)]{PhysRevA.76.032315}
M.~B. Hastings.
\newblock Random unitaries give quantum expanders.
\newblock \emph{Phys. Rev. A}, 76:\penalty0 032315, 2007.

\bibitem[Gross and Eisert(2008)]{Margulis}
D.~Gross and J.~Eisert.
\newblock {Quantum Margulis expanders}.
\newblock \emph{Quant. Inf. Comp.}, 8:\penalty0 722, 2008.

\bibitem[Harrow(2008)]{HarrowExpander}
A.~W. Harrow.
\newblock {Quantum expanders from any classical Cayley graph expander}.
\newblock \emph{Q. Inf. Comp.}, 8:\penalty0 715--721, 2008.

\bibitem[Margulis(1973)]{MargulisExpander}
G.~Margulis.
\newblock Explicit construction of concentrators.
\newblock \emph{Problemy Peredaci Informacii}, 9:\penalty0 71, 1973.

\bibitem[Ambainis and Smith(2004)]{Ambainis2004}
A.~Ambainis and A.~Smith.
\newblock Small pseudo-random families of matrices: Derandomizing approximate
  quantum encryption.
\newblock In \emph{RANDOM 2004}, Cambridge, MA, 2004.

\bibitem[Bengtsson and \.{Z}yczkowski(2017)]{Bengtsson_2017}
I.~Bengtsson and K.~\.{Z}yczkowski.
\newblock \emph{Geometry of quantum states : an introduction to quantum
  entanglement}.
\newblock Cambridge University Press, Cambridge, 2nd edition, 2017.

\bibitem[Gross et~al.(2007)Gross, Audenaert, and Eisert]{Gross_2007}
D.~Gross, K.~Audenaert, and J.~Eisert.
\newblock Evenly distributed unitaries: On the structure of unitary designs.
\newblock \emph{J. Math. Phys.}, 48:\penalty0 052104, 2007.

\bibitem[Walgate et~al.(2000)Walgate, Short, Hardy, and
  Vedral]{Walgate2000Local}
J.~Walgate, A.~J. Short, L.~Hardy, and V.~Vedral.
\newblock Local distinguishability of multipartite orthogonal quantum states.
\newblock \emph{Phys. Rev. Lett.}, 85:\penalty0 4972--4975, 2000.

\bibitem[Bennett et~al.(1996)Bennett, DiVincenzo, Smolin, and
  Wootters]{Bennett1996MixedState}
C.~H. Bennett, D.~P. DiVincenzo, J.~A. Smolin, and W.~K. Wootters.
\newblock Mixed-state entanglement and quantum error correction.
\newblock \emph{Phys. Rev. A}, 54:\penalty0 3824--3851, 1996.

\bibitem[Bennett et~al.(2014)Bennett, Brassard, and
  Breidbart]{Bennett2014Quantum}
C.~H. Bennett, G.~Brassard, and S.~Breidbart.
\newblock {Quantum cryptography II: How to re-use a one-time pad safely even if
  P$=$NP}.
\newblock \emph{Natural Computing}, 13\penalty0 (4):\penalty0 453--458, 2014.

\bibitem[Damgard et~al.(2005)Damgard, Pedersen, and
  Salvail]{Damgard2005Quantum}
I.~Damgard, T.~B. Pedersen, and L.~Salvail.
\newblock A quantum cipher with near optimal key-recycling.
\newblock In \emph{CRYPTO 2005: 25th Annual International Cryptology
  Conference, Santa Barbara}, volume 3621 of \emph{Lecture Notes in Computer
  Science}, pages 494--510. Springer, 2005.

\bibitem[Fehr and Salvail(2017)]{Fehr2017}
S.~Fehr and L.~Salvail.
\newblock Quantum authentication and encryption with key recycling.
\newblock In \emph{Annual International Conference on the Theory and
  Applications of Cryptographic Techniques}, pages 311--338. Springer, 2017.

\bibitem[Oppenheim and Horodecki(2005)]{Oppenheim2005How}
J.~Oppenheim and M.~Horodecki.
\newblock How to reuse a one-time pad and other notes on authentication,
  encryption, and protection of quantum information.
\newblock \emph{Phys. Rev. A}, 72:\penalty0 042309, 2005.

\bibitem[Gross(2006)]{DiscreteHudson}
D.~Gross.
\newblock Hudson's theorem for finite-dimensional quantum systems.
\newblock \emph{J. Math. Phys.}, 47:\penalty0 122107, 2006.

\end{thebibliography}

\appendix

\section{Lower bounds on dimension of source of randomness}
\label{sec:necessary}
In this section we prove the lower bounds in Lemma~\ref{lemma:optimaldimensions}. In fact, we prove them in an approximate setting to show that they are robust to small deviations from exact dephasing. 
To do so, call a map $\mc E^m_X$ \emph{$\epsilon$-dephasing} if, for all operators $\rho \in \mc B(\mc H_S)$ and some fixed basis $A$,
	 	 \begin{align} \label{eq:norm_bound}
	     \norm{ \mc E^m_X(\rho) - \pi_A(\rho)}_1 \leq \epsilon, 
	 	\end{align}
where $X \in \{C,Q\}$. 
Let $m^*_X(d,\epsilon)$ be the smallest value of $m$ such that an $\epsilon$-dephasing map can be realised as a map of the form \eqref{eq:def:classicalnoise} for $X=C$ and \eqref{eq:def:quantumnoise} for $X=Q$ respectively, $\dim(\mc H_S) = d$.

We begin with the classical bound. Consider the state vector
\begin{align}
	\ket{A} \coloneqq \frac{1}{\sqrt{d}} \sum_i \ket{i}. 
\end{align}
If it is dephased in the basis $A=\{\ket{i}\}$, it is mapped to the maximally mixed state.
We are concerned with deriving the minimal value of $m$ such that $\mc{E}_C^m(\ketbra{A}{A})=\id_d$.
For this, note that
\begin{align}
	{\mc{E}_C^m(\ketbra{A}{A})} = \frac{1}{m}\sum_{j=1}^m U_j \proj{A}U_j^\dagger.
\end{align}
Clearly, this state has at most rank $m$, since its support is spanned by $m$ vectors. Moreover, it is easy to see that for any $\epsilon$-dephasing classical map $\mc{E}_C^m$, 
\begin{align} \label{eq:rank_bound}
    \text{rank }\mc{E}_C^m(\proj{A}) \geq d(1-\frac{\epsilon}{2}),
\end{align}
which implies
\begin{align}
	m^*_C(d, \epsilon) \geq m \geq \max\left\{2,d(1-\frac{\epsilon}{2})\right\},
\end{align}
where we also used that any non-trivial source of randomness must be at least two-dimensional.

To see \eqref{eq:rank_bound}, consider any state $\rho$ of rank $k$. Then 
\begin{align}
    \norm{\rho - \id_d}_1 \geq \norm{\id_{k,d} - \id_d}_1 = 2(1-\frac{k}{d}),
\end{align}
where $\id_{k,d}$ is a $d$-dimensional state that is maximally mixed on a subspace of dimension $k$ (and hence has rank $k$). 
Using \eqref{eq:norm_bound} and re-arranging then gives bound $\eqref{eq:rank_bound}$.

Let us now turn to the quantum case, where we find
     \begin{align} \label{eq:quantum_bound}
	     m^*_Q(d, \epsilon) \geq \max\left\{2,d^\frac{1-\epsilon}{2} \epsilon^{\frac{\epsilon}{2}}\right\},\quad \forall \epsilon\leq \frac{1}{6e}.
     \end{align}
     First note that for $d\leq 4$, our optimal construction already yields $m=2=\lceil d^{1/2}\rceil$ and that any non-trivial source of randomness must have $m\geq 2$. In the following, we hence assume $d\geq 5$. 
Now consider again the initial state $\proj{A}$. Then, for any $\epsilon$-dephasing map $\mc E_Q^m$, applying Fannes' inequality yields
\begin{align} \label{eq:ent_bound}
	S(\mc E^m_Q(\proj{A})) \geq \log d + \epsilon \log \left(\frac{\epsilon}{d}\right). 
 \end{align}
 In the following, let $\rho_R'$ denote the state on the $m$-dimensional source of randomness after the dephasing map has been applied. 
 From our construction of the exact dephasing map, we know that $m^*(d,\epsilon)\leq \lceil  d^{1/2}\rceil$. 
 Hence, in the following we assume $2\leq m\leq \lceil d^{1/2}\rceil$. Since $\epsilon \leq 1/6e$ and 
 \begin{align}
	 \log\left(\left\lceil d^{1/2} \right\rceil\right) - \log\left( d^{1/2}\right) \leq 1/2,\quad \forall d\geq 5,	 
\end{align}
it follows using \eqref{eq:ent_bound} that
\begin{align}
	S(\mc E^m_Q(\proj{A})) > \log\left(\left\lceil d^{1/2}\right\rceil\right) \geq S(\rho_R').
\end{align}
We finally use the Lieb-Araki triangle inequality, which states that
\begin{align}
	S(\rho_{A,B}) \geq \left|S(\rho_A) - S(\rho_B)\right|
\end{align}
for any bipartite state $\rho_{A,B}$. We can now use this to bound
\begin{align}
	\log m &= S(\proj{A}) + S(\id_m) = S(U\proj{A}\otimes\id_m U^\dagger) \\
	&\geq \left|S(\mc E^m_Q(\proj{A})) - S(\rho_R') \right| \\
	&= S(\mc E^m_Q(\proj{A})) - S(\rho_R')\\
	&\geq \log(d) + \epsilon \log(\epsilon/d) - \log m. 
\end{align}
Hence, we obtain
\begin{align}
	m \geq d^{\frac{1-\epsilon}{2}} \epsilon^\frac{\epsilon}{2},
\end{align}
which finishes the proof. 
%

\section{The universal dephasing machine}
\label{app:universal_dephasing_machine}
In this appendix, we provide further details on the results regarding the universal dephasing machine. For convenience, we drop the subscripts for the dephasing maps and the maximally mixed states.

\subsection{Robustness with respect to imperfect noise}
\label{app:robustness_noise}
Let us first show the following lemma.
\begin{lemma}[General properties of $\mc D_\sigma$]\label{lemma:properties_D}
The family of channels $\mc D_\sigma$ has the following properties:
\begin{enumerate}
\item \emph{(Fixed points)} All diagonal states are fixed points:
\begin{align}\mc D_\sigma(\pi(\rho))=\pi(\rho),\quad \forall \sigma,\rho. \end{align}
\item \emph{(Invariant diagonal)} The channels do not modifiy the diagonal of any state in the given basis:
\begin{align}
	\pi(\mc D_\sigma(\rho)) = \pi(\rho), \quad \forall \sigma,\rho.
\end{align}
\item \emph{(Continuity)} 
 The following continuity property holds: 
\begin{align}
 \norm{\mc D_\sigma(\rho) - \pi(\rho)}_1 &\leq \norm{\sigma - \id}_1.
\end{align}
\end{enumerate}
\end{lemma}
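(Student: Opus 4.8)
The plan is to reduce all three properties to a single explicit formula for the action of $\mc D_\sigma$ in the fixed basis $A=\{\ket{i}\}$, and then treat the continuity bound as the only substantive step. Writing $U=\sum_i \proj{i}\otimes U_i$ as in Lemma~\ref{lem:dephasing_basic} and tracing out $R$, I would obtain
\begin{align}
\mc D_\sigma(\rho) = \sum_{i,j}\bra{i}\rho\ket{j}\,\ketbra{i}{j}\,\Tr\!\left(U_j^\dagger U_i\,\sigma\right),
\end{align}
so that the $(i,j)$ matrix element of $\mc D_\sigma(\rho)$ is simply $\bra{i}\rho\ket{j}\,\Tr(U_j^\dagger U_i\sigma)$. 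This identity is the engine for everything that follows, and it is the place where the defining property $\tfrac{1}{m}\Tr(U_iU_j^\dagger)=\delta_{i,j}$ of the operator basis enters.

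For the invariant-diagonal property I would set $i=j$: since each $U_i$ is unitary, $\Tr(U_i^\dagger U_i\sigma)=\Tr(\sigma)=1$, so the diagonal entries of $\mc D_\sigma(\rho)$ coincide with those of $\rho$, giving $\pi(\mc D_\sigma(\rho))=\pi(\rho)$ for all $\sigma,\rho$. For the fixed-point property I would feed in a diagonal input, for which $\bra{i}\pi(\rho)\ket{j}=\bra{i}\rho\ket{i}\,\delta_{i,j}$ annihilates every off-diagonal term, and the surviving diagonal terms again carry the factor $\Tr(\sigma)=1$, yielding $\mc D_\sigma(\pi(\rho))=\pi(\rho)$. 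Both of these are immediate algebraic consequences of the unitarity of the $U_i$ and require no estimates whatsoever.

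The continuity bound is where the real content lies, but it dissolves cleanly once one exploits that $\mc D_\sigma$ is \emph{linear} in $\sigma$ together with the fact, established in Lemma~\ref{lem:dephasing_basic}, that $\mc D_{\id}=\pi$. Concretely, I would subtract these two maps and use linearity to move the difference into the argument,
\begin{align}
\mc D_\sigma(\rho)-\pi(\rho)=\mc D_\sigma(\rho)-\mc D_{\id}(\rho)=\Tr_R\!\left[U\,(\rho\otimes(\sigma-\id))\,U^\dagger\right],
\end{align}
and then chain three standard facts: conjugation by the unitary $U$ preserves the trace norm, the partial trace is a contraction in trace norm (valid for the general, non-positive Hermitian operator $\sigma-\id$, not merely for states), and the trace norm is multiplicative over tensor products. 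This gives
\begin{align}
\norm{\mc D_\sigma(\rho)-\pi(\rho)}_1 \leq \norm{\rho\otimes(\sigma-\id)}_1 = \norm{\rho}_1\,\norm{\sigma-\id}_1 = \norm{\sigma-\id}_1,
\end{align}
using $\norm{\rho}_1=1$. The only point that needs care — and what I would flag as the conceptual crux rather than a genuine obstacle — is the reduction step: recognizing that $\pi(\rho)=\mc D_{\id}(\rho)$ lets the $\id$ be absorbed into a single partial trace, after which no property-specific estimate is needed and the bound follows from generic trace-norm monotonicity alone.
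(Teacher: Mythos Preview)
Your proposal is correct and follows essentially the same route as the paper: properties~1 and~2 are obtained by direct inspection of the matrix elements of $\mc D_\sigma(\rho)$ in the basis $A$ (the paper computes $\bra{k}\mc D_\sigma(\rho)\ket{k}=\rho_{k,k}$ using unitarity of the $U_i$ just as you do), and the continuity bound is derived by writing $\pi(\rho)=\mc D_{\id}(\rho)$, collapsing the difference into a single partial trace of $U(\rho\otimes(\sigma-\id))U^\dagger$, and invoking unitary invariance together with the contractivity of the partial trace. Your write-up is slightly more explicit about the multiplicativity of the trace norm on tensor products and the validity of the contraction step for the Hermitian (non-positive) operator $\sigma-\id$, but the argument is identical in substance.
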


\begin{proof}
The first two properties follow from the definition of $\mc D_\sigma$ in \eqref{eq:dephasing_map}, since
\begin{align}
    \bra{k}\Tr_B(U(\rho \otimes \sigma)U^\dagger)\ket{k} &= \sum_{i,j} \braket{k}{i}\bra{i}\rho \ket{j}\braket{j}{k} \Tr(U_i \sigma U_j^\dagger) \\
    &=  \rho_{k,k} \Tr(U_k^\dagger U_k \sigma)=\rho_{k,k}. 
\end{align}
The continuity property can be seen as
\begin{align}
 \norm{\mc D_\sigma(\rho) - \pi(\rho)}_1 &= \norm{\tr_B(U(\rho \otimes \sigma)U^\dagger) - \tr_B(U(\rho \otimes \id)U^\dagger) }_1 \nonumber\\
 	&\leq \norm{U(\rho \otimes (\sigma - \id))U^\dagger}_1 \nonumber\\
 	&= \norm{\rho \otimes (\sigma - \id)}_1 \nonumber\\
 	&= \norm{\sigma - \id}_1,
\end{align}
where we have used the data-processing inequality and the unitary invariance of the norm. 
\end{proof}

In particular, the fixed-point property has the following corollaries.
\begin{corollary}[Contraction to dephased state]
Let $f(\rho,\rho')$ be any measure of distance between quantum states that fulfills the data-processing inequality, for example, any Renyi-divergence or the trace-distance~\cite{Nielsen2000}. 
Then
\begin{align}
f(\rho,\pi(\rho)) \geq f(\mc D_\sigma(\rho),\pi(\rho)),\quad \forall \sigma.
\end{align}
\end{corollary}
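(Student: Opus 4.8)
The plan is to observe that this corollary is an immediate consequence of the fixed-point property established in the preceding lemma, combined with the defining feature of $f$. The whole argument is a three-line generalization of the trace-norm bound already displayed in the main text, where one writes $\norm{\mc D_\sigma(\rho)-\pi(\rho)}_1 = \norm{\mc D_\sigma(\rho)-\mc D_\sigma(\pi(\rho))}_1 \leq \norm{\rho-\pi(\rho)}_1$; here we simply replace the trace norm by an arbitrary $f$ obeying data processing.

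Concretely, I would first note that for every $\sigma$ the map $\mc D_\sigma$ is completely positive and trace preserving. This is clear from its definition in Eq.~\eqref{eq:dephasing_map}: it is the composition of appending the fixed ancillary state $\sigma$, conjugating by the unitary $U$, and tracing out $R$, each of which is a legitimate quantum channel. Hence $f$ must be non-increasing under $\mc D_\sigma$ by hypothesis.

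Next I would invoke the fixed-point property of Lemma~\ref{lemma:properties_D}, namely $\mc D_\sigma(\pi(\rho)) = \pi(\rho)$, to rewrite the target state on the right-hand side of the inequality as the image of $\pi(\rho)$ under the channel $\mc D_\sigma$. Applying the data-processing inequality for $f$ to the input pair $(\rho,\pi(\rho))$ and the channel $\mc D_\sigma$ then gives
\begin{align}
f\bigl(\mc D_\sigma(\rho),\mc D_\sigma(\pi(\rho))\bigr) \leq f(\rho,\pi(\rho)),
\end{align}
and substituting the fixed-point identity $\mc D_\sigma(\pi(\rho))=\pi(\rho)$ into the left-hand side yields exactly the claimed bound $f(\mc D_\sigma(\rho),\pi(\rho)) \leq f(\rho,\pi(\rho))$.

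There is essentially no obstacle to overcome: the only facts used are that $\mc D_\sigma$ is CPTP and that $f$ contracts under CPTP maps, the latter being precisely the assumed data-processing property. The single point worth stating explicitly is the fixed-point identity, which is imported from the lemma rather than re-derived. The examples listed in the statement (Rényi divergences and the trace distance) all satisfy data processing, so the corollary applies to each of them uniformly.
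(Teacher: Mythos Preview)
Your proposal is correct and matches the paper's approach exactly: the corollary is stated in the appendix as an immediate consequence of the fixed-point property $\mc D_\sigma(\pi(\rho))=\pi(\rho)$ from Lemma~\ref{lemma:properties_D} combined with the data-processing inequality, and the trace-norm version you cite from the main text is precisely the template being generalized. There is nothing to add.
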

Choosing $f(\rho,\sigma)$ as the quantum relative entropy $S(\rho\|\sigma)$ and using that $S(\rho \| \pi(\rho)) = S(\pi(\rho)) - S(\rho)$ we then obtain the following corollary.
\begin{corollary}[Increasing entropy] The channels $\mc D_\sigma$ can only increase the von~Neumann entropy:
\begin{align}
S(\rho) \leq S(\mc D_\sigma(\rho)),\quad \forall \sigma.
\end{align}\label{cor:increasing_entropy}
\end{corollary}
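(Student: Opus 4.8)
The plan is to specialize the preceding contraction corollary to the quantum relative entropy and then apply the identity $S(\rho\|\pi(\rho))=S(\pi(\rho))-S(\rho)$ twice---once to $\rho$ and once to $\mc D_\sigma(\rho)$---so that the contraction of relative entropy collapses into a monotonicity statement for the von~Neumann entropy.

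First I would establish the identity itself. Since $\pi(\rho)=\sum_i \rho_{i,i}\proj{i}$ is diagonal in the basis $A$, one has $\log\pi(\rho)=\sum_i(\log\rho_{i,i})\proj{i}$, so that $\Tr(\rho\log\pi(\rho))=\sum_i\rho_{i,i}\log\rho_{i,i}=-S(\pi(\rho))$. Substituting this into $S(\rho\|\pi(\rho))=\Tr(\rho\log\rho)-\Tr(\rho\log\pi(\rho))$ immediately gives $S(\rho\|\pi(\rho))=S(\pi(\rho))-S(\rho)$, exactly as asserted in the text preceding the corollary.

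Next I would apply the contraction corollary with $f(\cdot,\cdot)=S(\cdot\|\cdot)$, which obeys the data-processing inequality, yielding $S(\rho\|\pi(\rho))\geq S(\mc D_\sigma(\rho)\|\pi(\rho))$. The decisive step is to invoke the invariant-diagonal property (item~2 of Lemma~\ref{lemma:properties_D}), namely $\pi(\mc D_\sigma(\rho))=\pi(\rho)$, so that the reference state appearing on the right is exactly the same diagonal state $\pi(\rho)$ as on the left; only then does the identity, applied now to $\mc D_\sigma(\rho)$, read $S(\mc D_\sigma(\rho)\|\pi(\rho))=S(\pi(\rho))-S(\mc D_\sigma(\rho))$. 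Inserting both expressions into the contraction inequality gives $S(\pi(\rho))-S(\rho)\geq S(\pi(\rho))-S(\mc D_\sigma(\rho))$, and cancelling the common term $S(\pi(\rho))$ rearranges to the desired $S(\rho)\leq S(\mc D_\sigma(\rho))$.

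There is no genuine obstacle beyond careful bookkeeping; the only point demanding attention is that the identity $S(\cdot\|\pi(\cdot))=S(\pi(\cdot))-S(\cdot)$ rewrites the relative entropy only when its second argument is the diagonal of its first, so the invariant-diagonal property is indispensable for forcing both relative entropies to share the single reference state $\pi(\rho)$ before the cancellation can be performed.
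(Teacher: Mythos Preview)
Your proof is correct and follows exactly the route the paper indicates: specialize the contraction corollary to the quantum relative entropy and use the identity $S(\rho\|\pi(\rho))=S(\pi(\rho))-S(\rho)$. You spell out more explicitly than the paper does why the invariant-diagonal property $\pi(\mc D_\sigma(\rho))=\pi(\rho)$ is needed to apply the identity on the right-hand side, which is a helpful clarification rather than a departure from the intended argument.
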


So far we have only considered single applications of the dephasing map. 
Let us now consider repeated applications. 
We thus want to investigate what happens if we have a stream of sources of randomness $\sigma_i$ and sequentially use them to dephase the system. 
To this end, we can prove the following lemma:
\begin{lemma}[Iterated dephasing]
	Let $\{\sigma_i\}_{i=1}^n$ be arbitrary quantum states of dimension $\lceil d^{1/2}\rceil$. Then we have
	\begin{align}
		\norm{(\mc D_{\sigma_n}\circ \cdots \circ \mc D_{\sigma_1})(\rho) - \pi(\rho)}_1 \leq \Pi_{i=1}^n \norm{\sigma_i - \id}_1.	
	\end{align}
	\begin{proof}
		We prove the case $n=2$. The general result follows by iteration. First we use $\pi(\rho) = \pi\circ \mc D_\sigma(\rho)= \mc D_\sigma\circ \pi(\rho)$ to write
	\begin{align}
		\norm{(\mc D_{\sigma_2} \circ \mc D_{\sigma_1})(\rho) - \pi(\rho)}_1 &= \norm{(\mc D_{\sigma_2}-\pi) \circ (\mc D_{\sigma_1}-\pi)(\rho)}_1.\nonumber
  	\end{align}
	We can then estimate this norm as 
	\begin{align}
		\norm{(\mc D_{\sigma_2} \circ \mc D_{\sigma_1})(\rho) - \pi(\rho)}_1 &\leq \norm{\mc D_{\sigma_1}-\pi}_{1\to 1} \norm{\mc D_{\sigma_2}-\pi}_{1\to 1}, 
	\end{align}
	where $\norm{\cdot}_{1\to 1}$ is the norm on super-operators induced by the $1$-norm. From Lemma~\ref{lemma:properties_D} we can estimate it as
	\begin{align}
	\norm{\mc D_\sigma -\pi}_{1\to 1} = \max_\rho \norm{\mc D_\sigma(\rho)-\pi(\rho)}_1 \leq \norm{\sigma-\id}.
	\end{align}
	This step completes the proof. 
	\end{proof}
\end{lemma}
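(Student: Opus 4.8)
The plan is to reduce the $n$-fold claim to a single algebraic identity that expresses the deviation of the composed channel from the pinching $\pi$ as a composition of the single-step deviations, and then to bound each factor with the continuity estimate already contained in Lemma~\ref{lemma:properties_D}. Introducing the deviation superoperators $\Delta_i \coloneqq \mc D_{\sigma_i} - \pi$, the key claim is the telescoping identity
\begin{align}
	\mc D_{\sigma_n}\circ \cdots \circ \mc D_{\sigma_1} - \pi = \Delta_n \circ \cdots \circ \Delta_1. \nonumber
\end{align}
Everything else is a short norm estimate, so establishing this identity is the crux of the argument.

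To prove the identity I would induct on $n$, relying on three structural facts about $\pi$. It is idempotent, $\pi\circ\pi=\pi$; the fixed-point property of Lemma~\ref{lemma:properties_D} gives $\mc D_\sigma\circ\pi=\pi$ for every $\sigma$; and the invariant-diagonal property gives $\pi\circ\mc D_\sigma=\pi$, whence $\pi\circ\Delta_i = \pi\circ\mc D_{\sigma_i}-\pi\circ\pi = 0$. The case $n=1$ is the definition of $\Delta_1$. For the step, abbreviate $P \coloneqq \Delta_{n-1}\circ\cdots\circ\Delta_1$, which by hypothesis equals $\mc D_{\sigma_{n-1}}\circ\cdots\circ\mc D_{\sigma_1}-\pi$. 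Using $\mc D_{\sigma_n}\circ\pi=\pi$ I would write
\begin{align}
	\mc D_{\sigma_n}\circ\cdots\circ\mc D_{\sigma_1} - \pi = \mc D_{\sigma_n}\circ(P+\pi) - \pi = \mc D_{\sigma_n}\circ P, \nonumber
\end{align}
and since $P$ carries $\Delta_{n-1}$ as its leftmost factor, $\pi\circ\Delta_{n-1}=0$ forces $\pi\circ P = 0$, so that $\Delta_n\circ P = \mc D_{\sigma_n}\circ P - \pi\circ P = \mc D_{\sigma_n}\circ P$. Comparing the two displays closes the induction.

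Granting the identity, the bound is immediate. Passing to the superoperator norm $\norm{\cdot}_{1\to 1}$ induced by the trace norm and using its submultiplicativity under composition yields
\begin{align}
	\norm{\mc D_{\sigma_n}\circ\cdots\circ\mc D_{\sigma_1} - \pi}_{1\to 1} \leq \Pi_{i=1}^n \norm{\Delta_i}_{1\to 1}, \nonumber
\end{align}
and the continuity property of Lemma~\ref{lemma:properties_D} bounds each factor by $\norm{\Delta_i}_{1\to 1} = \max_\rho \norm{\mc D_{\sigma_i}(\rho)-\pi(\rho)}_1 \leq \norm{\sigma_i-\id}_1$. Evaluating the resulting operator-norm inequality on the fixed input $\rho$ gives the stated estimate.

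I expect the only delicate point to be the telescoping identity. One could instead expand $\Pi_{i=1}^n(\mc D_{\sigma_i}-\pi)$ directly and verify that all $2^n-1$ cross terms carrying at least one factor of $\pi$ collapse to a signed copy of $\pi$ that cancels, but tracking these cancellations by hand is error-prone; the two-sided absorption $\pi\circ\mc D_\sigma=\mc D_\sigma\circ\pi=\pi$ together with idempotence makes the inductive bookkeeping substantially cleaner, which is why I would favour the induction above.
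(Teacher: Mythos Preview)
Your proposal is correct and follows essentially the same route as the paper: the paper also reduces the problem to the identity $(\mc D_{\sigma_2}-\pi)\circ(\mc D_{\sigma_1}-\pi)=\mc D_{\sigma_2}\circ\mc D_{\sigma_1}-\pi$ via the two absorption relations $\pi\circ\mc D_\sigma=\mc D_\sigma\circ\pi=\pi$, then bounds each factor in the $1\to 1$ norm using the continuity estimate of Lemma~\ref{lemma:properties_D}. The only difference is cosmetic: the paper treats $n=2$ explicitly and then invokes iteration, whereas you carry out the induction on $n$ in full.
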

We thus find that $\rho$ converges exponentially quickly to the dephased state upon iterated application of $\mc D_\sigma$ provided that $\norm{\sigma_i-\id}_1\leq k<1$ for some $k$ and all $\sigma_i$. 

\subsection{Action on source of randomness}
\label{app:action_sor}
Let us now consider the action of the dephasing unitary on the source of randomness. Given some $\rho$, we are thus interested in the channel 
\begin{align}
	\tilde{\mc D}_\rho(\sigma) = \tr_S\left(U (\rho \otimes \sigma) U^\dagger\right).
\end{align}
This channel is always unital, i.e. it fulfills $\tilde{\mc D}_\rho(\id)=\id$ for any $\rho$.
Thus 
\begin{align}
	\norm{\tilde{\mc D}_\rho(\sigma) - \id}_1 \leq \norm{\sigma-\id}_1.
\end{align}
Let us denote by $\mathcal{R}$ the channel that maps any state into the maximally mixed state, $\mathcal{R}(\sigma)=\id$.
Then we have $\mathcal{R}=\tilde{\mc D}_\rho \circ \mathcal{R} = \mathcal{R}\circ \tilde{\mc D}_\rho$. By the same arguments as in the last section we then obtain the following lemma.

\begin{lemma}[Iterated mixing]
	Let $\{\rho_i\}_{i=1}^n$ be arbitrary quantum states of dimension $d$. Then we have
	\begin{align}
		\norm{(\tilde{\mc D}_{\rho_n}\circ \cdots \circ \tilde{\mc D}_{\rho_1})(\sigma) - \id}_1 \leq \Pi_{i=1}^n \norm{\rho_i - \id}_1.	
	\end{align}
\end{lemma}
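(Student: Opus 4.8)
The plan is to mirror the proof of the Iterated dephasing lemma, with the completely randomizing channel $\mathcal{R}$ playing the role of $\pi$ and the maps $\tilde{\mc D}_{\rho_i}$ playing the role of the $\mc D_{\sigma_i}$. Everything rests on a single-step contraction estimate, which I would establish first: for every state $\sigma$ and every $\rho$,
\begin{align}
\norm{\tilde{\mc D}_\rho(\sigma) - \id}_1 \leq \norm{\rho - \id}_1.
\end{align}
Starting from the explicit form $\tilde{\mc D}_\rho(\sigma) = \sum_{i=1}^d \bra{i}\rho\ket{i}\, U_i \sigma U_i^\dagger$, obtained exactly as in~\eqref{eq:map_on_noise}, I would invoke that a complete unitary operator basis twirls any normalized state to the maximally mixed one, $\frac{1}{d}\sum_{i=1}^d U_i \sigma U_i^\dagger = \id$. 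Subtracting this representation of $\id$ from $\tilde{\mc D}_\rho(\sigma)$ replaces the coefficient of each term by $\bra{i}\rho\ket{i}-1/d$, so that the triangle inequality together with $\norm{U_i\sigma U_i^\dagger}_1 = 1$ gives $\norm{\tilde{\mc D}_\rho(\sigma)-\id}_1 \le \sum_i |\bra{i}\rho\ket{i}-1/d| = \norm{\pi(\rho)-\id}_1$. Since $\pi$ is a channel fixing $\id$, the data-processing inequality bounds this by $\norm{\rho - \id}_1$, as desired.

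Next I would record the two absorbing identities already noted in the appendix, namely $\tilde{\mc D}_\rho\circ\mathcal{R} = \mathcal{R}$ (by unitality of $\tilde{\mc D}_\rho$) and $\mathcal{R}\circ\tilde{\mc D}_\rho = \mathcal{R}$ (since $\mathcal{R}$ sends every state to $\id$). As in the Iterated dephasing proof, these let me telescope: writing $\Delta_i \coloneqq \tilde{\mc D}_{\rho_i}-\mathcal{R}$, the cross terms collapse and
\begin{align}
\tilde{\mc D}_{\rho_n}\circ\cdots\circ\tilde{\mc D}_{\rho_1} - \mathcal{R} = \Delta_n\circ\cdots\circ\Delta_1.
\end{align}
Evaluating on $\sigma$, using $\mathcal{R}(\sigma)=\id$, and invoking submultiplicativity of the induced norm $\norm{\cdot}_{1\to1}$ yields $\norm{(\tilde{\mc D}_{\rho_n}\circ\cdots\circ\tilde{\mc D}_{\rho_1})(\sigma)-\id}_1 \le \prod_{i=1}^n \norm{\Delta_i}_{1\to1}$. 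Because each $\Delta_i$ is Hermiticity-preserving, its induced norm is attained on states, $\norm{\Delta_i}_{1\to1} = \max_\sigma\norm{\tilde{\mc D}_{\rho_i}(\sigma)-\id}_1$, which the single-step estimate bounds by $\norm{\rho_i-\id}_1$; this produces the claimed product.

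The main obstacle is the single-step estimate, and within it the twirl identity $\frac{1}{d}\sum_i U_i\sigma U_i^\dagger = \id$. This holds automatically when $\sqrt{d}\in\NN$, where the $U_i$ exhaust a full unitary operator basis of the $\sqrt{d}$-dimensional register and hence form a unitary $1$-design. When $\sqrt{d}\notin\NN$ one uses only $d<m^2$ of the available unitaries, and a generic subset no longer twirls exactly to $\id$; the register basis must then be chosen so that the selected $d$ unitaries still average to the completely depolarizing channel, which is the content of ``there exist choices for the unitary operator basis'' in the main text. I would stress that the weaker bound $\norm{\tilde{\mc D}_\rho(\sigma)-\id}_1 \le \norm{\sigma-\id}_1$, which follows merely from unitality and data processing, is insufficient here: it only gives $\norm{\Delta_i}_{1\to1}\le\max_\sigma\norm{\sigma-\id}_1$, a constant of order one that cannot drive the product to zero. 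It is precisely the $\rho$-dependent estimate that makes the iteration contractive and underlies the distillation picture of Fig.~\ref{fig:machine}.
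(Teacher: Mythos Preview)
Your proof is correct and follows essentially the same route as the paper, which simply says ``by the same arguments as in the last section'': establish $\mathcal{R}=\tilde{\mc D}_\rho\circ\mathcal{R}=\mathcal{R}\circ\tilde{\mc D}_\rho$, telescope to $\Delta_n\circ\cdots\circ\Delta_1$, and bound each factor via a single-step estimate $\norm{\tilde{\mc D}_\rho-\mathcal{R}}_{1\to1}\le\norm{\rho-\id}_1$. Your derivation of the single-step estimate via the twirl identity $\frac{1}{d}\sum_i U_i\sigma U_i^\dagger=\id$ and the triangle inequality is a minor variant of the ``analogous to Lemma~\ref{lemma:properties_D}'' argument (which would instead write $\id=\tilde{\mc D}_{\id_d}(\sigma)=\tr_S[U((\id_d)\otimes\sigma)U^\dagger]$ and use $\norm{\tr_S[U((\rho-\id_d)\otimes\sigma)U^\dagger]}_1\le\norm{\rho-\id_d}_1$); both hinge on exactly the same $1$-design property of the $\{U_i\}$, and you correctly flag the non-square-$d$ caveat that explains the paper's ``there exist choices'' qualifier.
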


\section{Recurrence and robustness in time}
 \label{app:timing}
 In this section we show that one can choose the operator basis $\{U_i\}$ from lemma~\ref{lem:dephasing_basic} in such a way that the dephasing map exhibits recurrence properties. 
 By recurrence we here mean that applying the dephasing unitary a certain number  of times undoes the dephasing, while it keeps it dephased for intermediate times. 
 
 To this end, note that one particular realization of this operator basis is the following: 
 Define the unitaries 
 \begin{align} \label{eq:def_special_uni}
      U_{r,s} \coloneqq \tau^{rs}X^r Z^s,
  \end{align}
  where $X, Z$ are the generalized Pauli matrices defined in~\eqref{X} and~\eqref{Z} respectively, and $\tau = -e^{\pi i/m} = -\sqrt{\omega}$. In the following, expressions are to be taken modulo $m$, unless specified otherwise. The conjugation relation $XZ=\omega^{-1}ZX$ then gives rise to the following properties in any dimension~\cite{Bengtsson_2017},
 \begin{align}
 	U_{r,s}U_{u,v} &= \omega^{us-vr}U_{u,v}U_{r,s} = \tau^{us-vr}U_{r+u,s+v}, \\
 	U_{r,s}^k &= U_{kr,ks} ,\\
 	U_{r,s}^\dagger &= U_{-r,-s},\\
 	\Tr(U_{r,s}) &=m \delta_{r,0}\delta_{s,0}.
 \end{align}
 These imply in particular that $\{U_{r,s}\}, r,s \in \{0, \dots, m-1\}$ form a unitary operator basis of $\mc B(\mc H)$. 
 Now, while it is clear that $X^m=Z^m=\id$, we can ask for the smallest $k$ such that $U_{r,s}^k=\id$ for all $r,s$. 
 The above conjugation relations imply that if $m$ is odd then this value is given by $m$, while for even $m$, the answer is $2m$. 
 For instance, in the case of $m=2$, we have $X^2=Z^2=\id$, while $(XZ)^2=-\id$. 
 Moreover, we can ask for the dependence of the \emph{order} of the unitaries $U_i$, by which we here mean the \emph{smallest} $k$ such that $U_i^k=\id$, i.e. the order of the corresponding element in the Weyl-Heisenberg group, on $m$. 
 Here, one has that the order of all non-trivial $U_i$ is $d$, if and only if $d$ is an odd prime. This special property for odd primes will be of key importance to establish recurrence relations in the following. 
 Define the map
 \begin{align}
     \pi^k_m(\cdot) &= 
 \begin{cases}
 	     \text{id}(\cdot), &\text{ if } k \text{ mod } m = 0, \\
 	     	\pi_A(\cdot), &\text{ otherwise, } 	
 	     \end{cases}
 \end{align}
where $A$ denotes the orthonormal basis in which the pinching acts, as in the main text. We then have the the following lemma.

\begin{lemma}[Recurrence for odd prime dimension]
Let $\dim{\mc H_S} = m^2, \dim{\mc H_R} = m$, where $m$ is an odd prime. There exists a unitary $V$ acting on $\mc H_S \otimes \mc H_R$ such that
\begin{align} \label{eq:prime_dilation}
     \Tr_B(V^k (\rho \otimes \id_m )(V^\dagger)^k) = \pi^k_m(\rho).
\end{align}
\end{lemma}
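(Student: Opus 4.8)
The plan is to use the explicit Weyl--Heisenberg realization of the operator basis quoted above, namely $U_{r,s} = \tau^{rs} X^r Z^s$ from~\eqref{eq:def_special_uni}, and to take $V$ to be exactly the dephasing unitary~\eqref{eq:dephasing_unitary_basic} built from these. Concretely, I would identify the system basis $A$ on $\mc H_S$ (where $\dim \mc H_S = m^2$) with pairs $(r,s)$, $r,s \in \{0,\dots,m-1\}$, and set
\begin{align}
V = \sum_{r,s} \proj{(r,s)} \otimes U_{r,s}.
\end{align}
Because the system projectors are mutually orthogonal, $V$ is block diagonal in this basis and its powers factor blockwise, so that by the relation $U_{r,s}^k = U_{kr,ks}$ listed above,
\begin{align}
V^k = \sum_{r,s} \proj{(r,s)} \otimes U_{r,s}^k = \sum_{r,s}\proj{(r,s)} \otimes U_{kr,ks}.
\end{align}

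Next I would compute the reduced channel $\rho \mapsto \Tr_R[V^k(\rho\otimes \id_m)(V^\dagger)^k]$ by repeating the calculation in the proof of Lemma~\ref{lem:dephasing_basic}, arriving at
\begin{align}
\Tr_R[V^k(\rho\otimes \id_m)(V^\dagger)^k] = \sum_{(r,s),(u,v)} \bra{(r,s)}\rho\ket{(u,v)}\,\ketbra{(r,s)}{(u,v)}\,\tfrac{1}{m}\Tr\!\big(U_{kr,ks}U_{-ku,-kv}\big).
\end{align}
Applying the composition and trace rules quoted above, the scalar $\tfrac1m\Tr(U_{kr,ks}U_{-ku,-kv})$ equals a root-of-unity phase times $\delta_{k(r-u)\equiv 0}\,\delta_{k(s-v)\equiv 0}$, with all indices read modulo $m$.

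The decisive step is then a case split driven by the primality of $m$. If $k\not\equiv 0 \pmod m$, then $k$ is invertible modulo the prime $m$, so $k(r-u)\equiv 0$ forces $r=u$ and likewise $s=v$; only the diagonal terms survive, their phase is trivial, and the channel collapses to $\pi_A(\rho)$. If instead $k\equiv 0 \pmod m$, I would invoke the fact (established above) that for odd prime $m$ every nontrivial $U_{r,s}$ has order exactly $m$, whence $U_{r,s}^k=\id$ for all $(r,s)$, so $V^k = \id$ and the channel returns $\rho$. These two cases are precisely the definition of $\pi^k_m(\rho)$.

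The main obstacle is the phase bookkeeping rather than any conceptual difficulty: I must verify that $U_{r,s}^m = \id$ \emph{exactly} and not merely up to a scalar, which is where oddness of $m$ enters through $\tau=-\sqrt{\omega}$ (one checks $\tau^{m^2 rs}=1$ using that $m+1$ is even), and confirm that the surviving diagonal phases in the $k\not\equiv 0$ case are genuinely $1$. Everything else is the direct blockwise computation inherited from Lemma~\ref{lem:dephasing_basic}, so no additional machinery is needed.
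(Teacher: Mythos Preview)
Your proposal is correct and follows essentially the same route as the paper: the same controlled-Weyl unitary $V=\sum_{r,s}\proj{r,s}\otimes U_{r,s}$, the same reduction to the scalar $\tfrac{1}{m}\Tr(U_{kr,ks}U_{-ku,-kv})$, and the same case split on $k\bmod m$ using invertibility of $k$ modulo the prime $m$ for the dephasing branch and $U_{r,s}^m=\id$ (from oddness of $m$) for the recurrence branch. Your explicit remark that the surviving diagonal phases must be checked to be exactly $1$ is precisely the content of the paper's function $\theta_m$.
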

\begin{proof}
Let $A = \{\ket{r,s}\}_{r,s=1}^{m}$ be the orthonormal basis of $\mc H_S$ in which we want to pinch the state $\rho$. 
Define 
 \begin{align} \label{new_unitary}
     V = \sum_{r,s} \proj{r,s}_S \otimes (U_{r,s})_R,
 \end{align}
where the basis with respect to which the operators~\eqref{eq:def_special_uni} are defined can be chosen arbitrarily.  
Then, from the properties of these operators, we have 
\begin{widetext}
\begin{align}
     \Tr_R (V^k (\rho \otimes \id/d )(V^\dagger)^k) &= \sum_{r,s,u,v} \proj{r,s}\rho \proj{u,v} \frac{1}{m}\Tr(U_{kr,ks} U_{-ku,-kv})\\
     &= \sum_{r,s,u,v} \proj{r,s}\rho \proj{u,v} \frac{1}{m}\tau^{k^2(us-rv)}\Tr(U^k_{r-u,s-v}) \\
      &= \sum_{r,s,u,v} \proj{r,s}\rho \proj{u,v}\theta_m(k,r,u,s,v)\\
     &= \pi^k_m(\rho),
\end{align}
where the last line follows because  
\begin{align} \label{eq:phase_cancel}
    \theta_m(k,r,u,s,v) &\coloneqq \frac{1}{m}\tau^{k^2(us-rv)}\Tr(U^k_{r-u,s-v}) &=
         \begin{cases}
         	1, &\text{ if $k$ mod $m = 0$ or both $r=u$ and $s=v$,}\\
         	0, &\text{ otherwise. }\\	
         \end{cases}
\end{align}
\end{widetext}
\end{proof}
 
The reason that this proof works only for odd prime dimensions is that, if $m$ is not prime, then there will exist an $k$ and $a,b,c,e$ such that the LHS of~\eqref{eq:phase_cancel} is $1$ for conditions other than those of~\eqref{eq:phase_cancel}. 
Furthermore, when $m=2$, then there will be diagonal elements such that~\eqref{eq:phase_cancel} is $-1$ for $k=2$ and only for $k=4$ do we get actual recurrence (implying in turn that for $m=2$ the map is neither the dephasing map nor the identity map).
 
 However, in the following lemma, we show that for any odd dimension we can construct a unitary operator basis that does exhibit recurrence. 
 \begin{lemma}[Recurrence for odd dimension]
 Let $\dim{\mc H_S} = m^2, \dim{\mc H_R} = m$, where $m$ is odd. There exists a unitary $V$ acting on $\mc H_S \otimes \mc H_R$ such that
	\begin{align}
     \Tr_B(V^k (\rho \otimes \id_m )(V^\dagger)^k) = \pi^k_m(\rho).
	\end{align}
 	\end{lemma}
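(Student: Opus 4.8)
The plan is to keep the controlled-unitary ansatz of the odd-prime case and reduce the statement to a purely algebraic property of the reservoir operators. Concretely, I would again look for $V$ of the form~\eqref{new_unitary}, $V=\sum_{i}\proj{i}_S\otimes (U_i)_R$ with $\{\ket i\}$ the basis $A$, so that $V^k=\sum_i \proj i_S\otimes U_i^k$ and the same trace computation as in~\eqref{eq:phase_cancel} gives $\Tr_R(V^k(\rho\otimes\id_m)(V^\dagger)^k)=\sum_{i,j}\proj i\rho\proj j\,\frac1m\Tr(U_i^k (U_j^k)^\dagger)$. Thus the lemma reduces to producing a unitary operator basis $\{U_i\}_{i=1}^{m^2}$ of $\mc H_R=\CC^m$ in the sense of~\eqref{eq:prop_unitary_basis} with two extra features: (i) $U_i^m=\id$ for every $i$, and (ii) for every $k\in\{1,\dots,m-1\}$ the family $\{U_i^k\}_i$ is again orthonormal, i.e.\ $\frac1m\Tr(U_i^k(U_j^k)^\dagger)=\delta_{ij}$. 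Given such a family, (i) yields the identity branch of $\pi^k_m$ at $k\equiv0$ and (ii) yields the pinching branch otherwise, reproducing~\eqref{eq:prime_dilation} verbatim.

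The main work is condition (ii), and the first step I would take is to recast it spectrally. Since $U_i^m=\id$, each $U_i$ is diagonalizable with $m$-th-root eigenvalues, $U_i=\sum_a \omega^{\alpha^{(i)}_a}\proj{e^{(i)}_a}$ with $\alpha^{(i)}_a\in\ZZ_m$. Writing $P^{(ij)}_{ab}=|\braket{e^{(i)}_a}{e^{(j)}_b}|^2$ one finds $\frac1m\Tr(U_i^k(U_j^k)^\dagger)=\frac1m\sum_{a,b}P^{(ij)}_{ab}\,\omega^{k(\alpha^{(i)}_a-\alpha^{(j)}_b)}$, which as a function of $k\in\ZZ_m$ is the discrete Fourier transform of the eigenphase-difference distribution $w_{ij}(\gamma)=\sum_{a,b:\,\alpha^{(i)}_a-\alpha^{(j)}_b\equiv\gamma}P^{(ij)}_{ab}$. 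Hence the orthonormality of the powers at $k\not\equiv0$ together with the reset at $k\equiv0$ is \emph{equivalent} to demanding that $w_{ij}$ be the uniform distribution on $\ZZ_m$ for every pair $i\neq j$ (the diagonal $i=j$ being automatic). This is the clean reformulation I would aim for: I no longer track powers, only arrange that every pair of reservoir operators has flat eigenphase-difference statistics.

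To satisfy this I would build the $U_i$ out of \emph{non-degenerate} order-$m$ unitaries, so that $\{\alpha^{(i)}_a\}=\ZZ_m$ as a set (each $m$-th root occurring once), organised into commuting families sharing an eigenbasis. For two operators in the same eigenbasis the uniform-difference condition becomes that the difference of their eigenphase vectors is a bijection of $\ZZ_m$ (a ``complete'' difference), while for operators in suitably related eigenbases flatness follows from balancedness of the overlap matrix $P^{(ij)}$, with mutual unbiasedness ($P^{(ij)}_{ab}=1/m$) as the extreme case. The place where the hypothesis enters is that producing enough pairwise-complete families of full-spectrum diagonal operators is a combinatorial design problem on $\ZZ_m$ that is solvable exactly when $m$ is odd: this is precisely the regime in which $\ZZ_m$ admits complete mappings (its Sylow $2$-subgroup being trivial), and it is exactly the feature whose absence makes the naive Weyl--Heisenberg basis~\eqref{eq:def_special_uni} fail in composite dimension, where the operators along non-primitive lines are degenerate and their powers collapse onto a proper subset.

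The hard part is therefore not the reduction or the Fourier reformulation, which are routine, but the explicit combinatorial construction of $m^2$ order-$m$ operators with pairwise uniform difference statistics for \emph{every} odd $m$, including non-prime-powers where mutually unbiased bases are far too scarce to be used alone. I expect this to require assembling the family from difference-matrix / complete-mapping data on $\ZZ_m$ rather than from a single group, and the main technical check will be that the resulting $m^2$ operators simultaneously (a) form an operator basis~\eqref{eq:prop_unitary_basis}, (b) have order exactly $m$, and (c) have flat difference distributions pairwise. Once this design is in hand, substituting it into $V=\sum_i\proj i\otimes U_i$ and reusing the computation of the odd-prime case closes the argument: the recurrence period is $m$ by (i) and the intermediate dephasing is exact by (ii).
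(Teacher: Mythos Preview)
Your reduction is correct: with the controlled ansatz $V=\sum_i\proj{i}\otimes U_i$, the lemma is equivalent to producing a unitary operator basis $\{U_i\}$ on $\CC^m$ with (i) $U_i^m=\id$ and (ii) $\{U_i^k\}_i$ orthonormal for every $k\in\{1,\dots,m-1\}$, and the Fourier reformulation via flat eigenphase-difference distributions is sound. The gap is that you never actually build such a family --- you flag the construction as ``the hard part'', invoke complete mappings and difference matrices on $\ZZ_m$, but supply no design and no argument that these ingredients assemble into $m^2$ operators meeting your checks (a)--(c) simultaneously. As it stands the proposal is a correct reformulation followed by a promissory note.

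The paper bypasses the design problem by a tensor reduction you did not consider: write $m=p_1\cdots p_l$ as a product of odd primes, split $\mc H_R\simeq\bigotimes_j\CC^{p_j}$, take $U_{\mathbf r,\mathbf s}=\bigotimes_j U^{(j)}_{r_j,s_j}$ with $U^{(j)}$ the Weyl operators on the $j$-th factor, and let the trace factorise as $\prod_j\theta_{p_j}(k,\cdot)$, each factor being handled by the odd-prime lemma. It is worth recording, though, that your caution about composite $m$ is on target. Whenever $k$ is divisible by some $p_j$ but not by $m$ (e.g.\ $k=3$ for $m=15$, or $k=3$ for $m=9=3\cdot3$), the corresponding $\theta_{p_j}(k,\cdot)\equiv1$ and $\prod_j\theta_{p_j}$ is only a \emph{partial} Kronecker delta, so the resulting map is a partial dephasing (for $m=9$, $k=3$ it is even the identity) rather than the full $\pi_A$. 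Equivalently, the tensor family fails your condition (ii) at precisely those $k$, and the paper's justification that ``$k=m$ is the smallest integer with $k\bmod p_j=0$ for all $j$'' is false for non-squarefree $m$ and in any case addresses only the identity branch. So the simple tensor idea is what you missed, but as written it does not dispose of the difficulty you correctly isolated.
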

 
 	\begin{proof}
 	Consider the prime factor decomposition of $m= p_1 \dots p_l$. We can split the Hilbert spaces as 
 	\begin{align}
 	    \mc H_R \simeq \bigotimes_{j=1}^l \mc H_j,
 	\end{align}
 	where $\dim(\mc H_j) = p_j$. Moreover, let $A=\{\ket{\mathbf{r},\mathbf{s}}\}$ be an orthonormal basis of $\mc H_S$, where $\mathbf{r}, \mathbf{s} \in \mc S \coloneqq \times_{j=1}^l \{1, \dots, p_j\}$, so that $|\mc S| = m$. Now, we define the unitary
 	\begin{align}
 	    V &= \sum_{\mathbf{r},\mathbf{s} \in \mc S} \proj{\mathbf{r},\mathbf{s}}_S \otimes \left( \bigotimes_j U^{(j)}_{r_j,s_j} \right)_R,
 	\end{align} 
 	where $U^{(j)}_{r,s}$ acts non-trivially only on $\mc H_j$ and $r_j,s_j$ denote the $j$-th component of the respective strings. The result now follows in just the same way as in the previous proof, as
 	\begin{widetext}
 \begin{align}
     \Tr_B (V^k (\rho \otimes \id/m )(V^\dagger)^k) &= \sum_{\mathbf{r},\mathbf{s},\mathbf{u},\mathbf{v}} \proj{\mathbf{r},\mathbf{s}}\rho \proj{\mathbf{u},\mathbf{v}} \prod_j^l \left(\frac{1}{p_j}\Tr(U^{(j)}_{kr_j,ks_j} U^{(j)}_{-ku_j,-kv_j})\right)\\
     &= \sum_{\mathbf{r},\mathbf{s},\mathbf{u},\mathbf{v}} \proj{\mathbf{r},\mathbf{s}}\rho \proj{\mathbf{u},\mathbf{v}} \prod_j^l \theta_{p_j}(k,r_j,s_j,u_j,v_j) \\
     &= \pi^k_m(\rho), 
     \end{align}
     \end{widetext}
     since $k=m$ is by construction the smallest integer such that $k \text{ mod } p_j = 0$ for all $j$.
 \end{proof}
 
Also, it should be noted that the case of even dimension can also be considered very close to a perfect dephasing map: 
Within the cycle $k \in \{1, \dots, 2m\}$, the only two times at which the above map does not dephase perfectly is at $k=m$ and $k=2m$. 
At the latter, it yields the identity map, while at the former, it yields the identity map up to sign flips on a subset of its elements. 

\section{Entanglement-assisted private quantum channel} 
\label{app:pqc}

Here, we present the proofs for the ideal PQC presented in the main text and discuss its properties. 
As our construction does not fit into the usual formal framework of PQCs with classical keys, let us first specify in more detail what we mean by a private quantum channel with a quantum key.
We assume that Alice and Bob hold a shared quantum system $K=K_AK_B$ in a state vector $\ket{\Psi}_{K}$, which we refer to as the key, and that Alice wants to encode a quantum system $S$ with Hilbert-space $\mc H_S$. 
For notational simplicity we write $\mc H_{K_A} = \mc H_A$ and $\mc H_{K_B}=\mc H_B$. 
Then an ideal private quantum channel with key $\ket{\Psi}_K$ is given by a pair of quantum channels $\mc X: \mc S(\mc H_S\otimes \mc H_{A})\rightarrow \mc S(\mc H_S'\otimes \mc H_{A})$ and $\mc Y:\mc S(\mc H_S'\otimes \mc H_{B})\rightarrow \mc S(\mc H_S\otimes \mc H_{B})$ with the following properties.
First, there exists a fixed state $\tau$, such that for all auxiliary systems $E$ and all states $\rho_{SE}$ on $S$ and $E$ we have
\begin{align}\label{eq:quantumsecrecy}
	\tr_K \circ (\mc X\otimes \text{id}_{K_BE})(\rho_{SE} \otimes \proj{\Psi}_K) =\tau\otimes\rho_{E}.
\end{align}
This implies that an eavesdropper cannot learn anything from the encoded message, even when previously entangled with $S$. 
Second, the transmission is reliable, that is for all states $\rho$ on $S$ we have
\begin{align}
	\tr_K\circ (\mc Y \otimes \text{id}_{K_A})\circ (\mc X\otimes \text{id}_{K_B})(\rho\otimes \proj{\Psi}_K) = \rho. 
\end{align}
In the following, we will show that the construction sketched in the main text fulfills this definition and explore some of its additional properties.
We begin with the following Lemma.
	\begin{lemma}[Properties of a private quantum channel]
	  Let $\rho \in \mc S(\mc H_S)$ with $\dim(\mc H_S) = d$ and let $\ket{\phi^+} \in \mc H_K = \mc H_A \otimes \mc H_B$ be an $e$-dimensional, maximally entangled bipartite state vector with $e = (\lceil d^{1/2}\rceil)^2$. Then there exist unitaries $U\in \mc B(\mc H_S \otimes \mc H_A), V \in \mc B(\mc H_S \otimes \mc H_B)$ such that 
	  \begin{align} \label{eq:pinching_after_one}
	      \tr_{A,B}(U (\rho\otimes \proj{\phi^+})U^\dagger) = \id_d, \quad \forall \rho
	  \end{align}
	  and 
	  	  \begin{align} \label{eq:return_split}
	      V U (\rho\otimes \proj{\phi^+})U^\dagger V^\dagger = \rho \otimes \proj{\phi^+}, \quad \forall \rho.
	  \end{align}
	\end{lemma}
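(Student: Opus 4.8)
The plan is to build both unitaries out of the catalytic dephasing unitary of Lemma~\ref{lem:dephasing_basic}, applied in two mutually unbiased bases. Write $m=\lceil d^{1/2}\rceil$ and view the $e=m^2$-dimensional maximally entangled state as $\ket{\phi^+}=\ket{\phi^+}_{A_1B_1}\otimes\ket{\phi^+}_{A_2B_2}$, a product of two $m$-dimensional maximally entangled states, so that $\mc H_A=\mc H_{A_1}\otimes\mc H_{A_2}$ are Alice's halves and $\mc H_B=\mc H_{B_1}\otimes\mc H_{B_2}$ are Bob's. Fixing two mutually unbiased bases $I=\{\ket i\}$ and $J=\{\ket j\}$ of $\mc H_S$ and unitary operator bases $\{U_i\}$, $\{W_j\}$ of $\mc B(\mc H_{A_1})$, $\mc B(\mc H_{A_2})$, I set $U_I=\sum_i\proj{i}_S\otimes (U_i)_{A_1}$ and $U_J=\sum_j\proj{j}_S\otimes (W_j)_{A_2}$ in the form of~\eqref{eq:dephasing_unitary_basic}, and take the encoding unitary to be $U=(U_J)_{SA_2}(U_I)_{SA_1}$, which acts only on $S$ and Alice's share $A$.

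For the secrecy identity~\eqref{eq:pinching_after_one} I would trace out Bob's registers first. Because $U$ is trivial on $B$ and $\tr_{B_k}\proj{\phi^+}_{A_kB_k}=\id_m$, this leaves Alice's halves maximally mixed and reduces the task to evaluating $\tr_{A_1A_2}[U(\rho\otimes\id_m\otimes\id_m)U^\dagger]$. Tracing out $A_2$ and then $A_1$ and applying the dephasing property~\eqref{eq:dephasingcondition1} of Lemma~\ref{lem:dephasing_basic} twice, with the remaining ancilla a passive spectator, collapses the expression to $\pi_J(\pi_I(\rho))$. It then only remains to note that mutual unbiasedness, $|\braket{i}{j}|^2=1/d$, makes every $I$-diagonal state have a flat diagonal in $J$, so $\pi_J\circ\pi_I$ is the completely depolarizing channel and the output is $\id_d$ for all $\rho$.

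For the reversibility identity~\eqref{eq:return_split} the key tool is the ricochet relation
\begin{align}
  (M)_{A_k}\ket{\phi^+}_{A_kB_k}=(M^T)_{B_k}\ket{\phi^+}_{A_kB_k},
\end{align}
which converts each conditional unitary that Alice applied on her side into its transpose acting on Bob's side. By linearity it suffices to check the statement on product vectors $\ket{\psi}_S\otimes\ket{\phi^+}$: expanding $\ket\psi$ in the $I$-basis, tracking the induced change of basis to $J$, and applying the ricochet relation to both ebits turns the encoded vector into a superposition in which $(U_i^T)_{B_1}$ and $(W_j^T)_{B_2}$ act on Bob's halves. I would then let Bob decode with $V=(V_I)_{SB_1}(V_J)_{SB_2}$, where $V_J=\sum_j\proj{j}_S\otimes(\overline{W_j})_{B_2}$ and $V_I=\sum_i\proj{i}_S\otimes(\overline{U_i})_{B_1}$, and use $\overline{U_i}\,U_i^T=\id$ together with $\sum_j\braket{j}{i}\ket{j}=\ket{i}$ to strip off first the $J$-conditioned layer and then the $I$-conditioned layer, recovering $\ket\psi_S\otimes\ket{\phi^+}$ exactly.

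The main obstacle is the bookkeeping in this last step rather than the secrecy computation, which is a routine double application of Lemma~\ref{lem:dephasing_basic}. One must keep straight that $U_J$ conditions on the $J$-basis while $U_I$ conditions on the $I$-basis, handle the intermediate basis change cleanly, and---most importantly---verify that $V$ acts only on $S$ and $B$ and never touches Alice's registers $A$, since it is exactly this locality that makes $V$ a legitimate decoding operation for Bob.
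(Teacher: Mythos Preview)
Your proposal is correct and follows essentially the same route as the paper: split the key into two $m$-dimensional ebits, encode with $U=U_JU_I$ built from the dephasing unitaries of Lemma~\ref{lem:dephasing_basic} in two mutually unbiased bases, and decode with $V=V_IV_J$ using the complex-conjugate unitaries on Bob's halves. The only cosmetic differences are that the paper computes the secrecy trace directly rather than invoking Lemma~\ref{lem:dephasing_basic} twice, and phrases the decoding via the identity $(U\otimes\bar U)\ket{\phi^+}=\ket{\phi^+}$ rather than your equivalent ricochet relation.
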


	\begin{proof}
	Consider first the case that $d$ is a square number, in which case $e=d$. We can assume w.l.o.g. that 
	\begin{align} \label{eq:splitting-entanglement}
	    \ket{\phi^+} = \ket{\phi_1^+} \otimes \ket{\phi_2^+}, 
	\end{align}
	where $\ket{\phi_i^+}$ are both $\sqrt{e}$-dimensional maximally entangled state vectors acting on $\mc H_{A_i} \otimes \mc H_{B_i}$ respectively, of the form 
	\begin{align}
	    \ket{\phi_i^+} = \frac{1}{e^{1/4}} \sum_{j=1}^{\sqrt{e}} \ket{j,j}_{A_iB_i}.
	\end{align}
	We can do this because Alice and Bob can always rotate between all maximally entangled states by applying local unitaries and hence prepare the above state. 
	We now define the unitaries
	\begin{align}
		U_I &= \sum_i^{d} \proj{i}_S \otimes (U_i)_{A_1}, \\
	     U_{J} &= \sum_j^{d} \proj{j}_S \otimes (U_j)_{A_2}, \\
	     U &= U_JU_I,
	 \end{align} 
	 where $\{U_i\}_{i=1}^{d}$, $\{U_j\}_{j=1}^{d}$ are unitary operator bases for $\mc H_{A_1}$ and $\mc H_{A_2}$ respectively, and $I = \{\ket{i}\}_{i=1}^d$ and $J = \{\ket{j}\}_{j=1}^d$ are any two 
	 \emph{mutually unbiased bases} (MUBs) for $\mc H_S$, that is, they are both orthonormal and 
\begin{align}
    |\braket{i}{j}|^2 = \frac{1}{d}, \quad \forall i,j.
\end{align}
In prime power dimension, there are known to exist sets of $d+1$ many of such MUBs, but there exist at least two in any dimension~\cite{Bengtsson_2017}. 

By direct evaluation, we now have
	  \begin{align}
	      &\tr_{A,B}(U (\rho\otimes \proj{\phi^+})U^\dagger)  \\
	     =& \sum_{i,i',j,j'} \ket{j}\braket{j}{i}\bra{i}\rho\ket{i'}\braket{i'}{j'}\bra{j'} \tr(U_iU_{i'}) \tr(U_jU_{j'})/d \\
	     =& \sum_{j} \tr(\rho) \frac{1}{d} \proj{j} = \id_d,
	  \end{align}
where we used both the orthonormality of the operator bases and the defining property of the MUBs. 

We now turn to the unitary $V$. The construction is very similar to that of $U$. In fact, we use the fact that, for any unitary $U$,
	\begin{align}
	    (U \otimes \bar{U} ) \ket{\phi_i^+} = \ket{\phi_i^+},
	\end{align}
	where the bar denotes complex conjugation. We therefore define
	\begin{align}
		V_I &= \sum_i^{d} \proj{i}_S \otimes (\bar{U}_i)_{B_1}, \\
	     V_J &= \sum_j^{d} \proj{j}_S \otimes (\bar{U}_j)_{B_2}, \\
	     V &= V_IV_J,
	 \end{align}
so that the unitaries now act on Bob's half of the entanglement.~\eqref{eq:return_split} then follows again by straightforward evaluation.

Finally, consider the case that $d$ is not a square number. $e$ is by construction always the smallest square number larger than, or equal to, $d$, so that we can always perform the splitting in~\eqref{eq:splitting-entanglement} in such a way that the resulting entangled states provide sufficient local randomness to perform the two dephasing operations.  
	\end{proof}

The above can now be used to construct an ideal PQC, as shown in the following.
	\begin{lemma}[Ideal private quantum channels] \label{lem:our_pqc}
		With the notation from the previous lemma, the maps
	 \begin{align}
	     \mc X(\cdot) &\coloneqq U (\cdot )U^\dagger ,\\
	     \mc Y(\cdot) &\coloneqq V (\cdot )V^\dagger
	 \end{align}
 form an ideal private quantum channel with key $\ket{\Psi}_K=\ket{\phi^+}$.
	\end{lemma}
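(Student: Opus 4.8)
The plan is to verify the two defining conditions of an ideal private quantum channel, namely reliability and secrecy, directly from the two properties~\eqref{eq:pinching_after_one} and~\eqref{eq:return_split} established in the previous lemma, with the fixed state taken to be $\tau=\id_d$.

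First I would dispatch reliability, which is essentially immediate. Feeding the key-augmented input $\rho\otimes\proj{\phi^+}$ into $\mc X\otimes\text{id}_{K_B}$ produces $U(\rho\otimes\proj{\phi^+})U^\dagger$, with $U$ acting on $S$ and $A$ and trivially on $B$. Applying $\mc Y\otimes\text{id}_{K_A}$ then conjugates this by $V$ (acting on $S$ and $B$), so that the composite state is exactly $V U(\rho\otimes\proj{\phi^+})U^\dagger V^\dagger$. By~\eqref{eq:return_split} this equals $\rho\otimes\proj{\phi^+}$, and tracing out the key $K=AB$ returns $\rho$, as required.

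The substantive part is secrecy, where the subtlety is that the reference (eavesdropper) system $E$ may initially be entangled with $S$. The idea is to isolate the effective channel that the encoding induces on $S$ alone. Since the key $\proj{\phi^+}$ is fixed and $U$ acts trivially on $B$ and $E$, the map $\tr_K\circ(\mc X\otimes\text{id}_{K_BE})$ factorizes as $\Phi\otimes\text{id}_E$, where $\Phi(X)\coloneqq\tr_{A,B}\bigl[U(X\otimes\proj{\phi^+})U^\dagger\bigr]$ is a linear completely positive map on $\mc B(\mc H_S)$; indeed, on product inputs $X_S\otimes Y_E$ the operation leaves $E$ untouched, and linearity extends this to all states on $SE$. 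Property~\eqref{eq:pinching_after_one} states precisely that $\Phi(\rho)=\id_d$ for every density matrix $\rho$.

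The key step, and the main obstacle to get right, is to upgrade this from a statement about states to a statement about the full channel: because density matrices span $\mc B(\mc H_S)$ as a complex vector space and $\Phi$ is linear, the identity $\Phi(\rho)=\id_d=\tr(\rho)\,\id_d$ on normalized states forces $\Phi(X)=\tr(X)\,\id_d$ for all operators $X$. Thus $\Phi$ is the completely depolarizing (trace-and-replace) channel, which is exactly the property guaranteeing that it destroys any correlation with a side system. Concretely, $(\Phi\otimes\text{id}_E)(\rho_{SE})=\id_d\otimes\tr_S(\rho_{SE})=\id_d\otimes\rho_E$, so~\eqref{eq:quantumsecrecy} holds with the fixed state $\tau=\id_d$, independently of the extension $\rho_{SE}$. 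Together with reliability, this shows that $(\mc X,\mc Y)$ is an ideal PQC.
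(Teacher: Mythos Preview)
Your proof is correct and follows essentially the same route as the paper: reliability from~\eqref{eq:return_split}, and secrecy by observing that~\eqref{eq:pinching_after_one} forces $\tr_K\circ\mc X$ to be the completely depolarizing channel, which therefore decouples $S$ from any reference system $E$. The only difference is cosmetic: the paper invokes Ref.~\cite{Hayden2004Randomizinga} for the fact that a completely randomizing map severs all correlations, whereas you prove this inline via the linearity argument that density matrices span $\mc B(\mc H_S)$.
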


	\begin{proof}
	The ideal reliability of the above construction follows immediately from~\eqref{eq:return_split}. The ideal security follows from the fact that every map $\mathcal{R}$ with the property that it completely randomizes a given system, 
\begin{align}
	\mathcal{R}(\rho) = \id_d, \quad \forall \rho \in \mc S(\mc H_S),    
\end{align}
completely destroys all correlations that this system may have had with other systems~\cite{Hayden2004Randomizinga}, in the sense that, for any extension $\rho_{SE}$ of some $\rho$, 
\begin{align} \label{eq:destroy_correlations}
     \norm{(\mathcal{R} \otimes \text{id})\rho_{SE} - \id_{d} \otimes \rho_E}_1 = 0.
 \end{align} 
But since $\tr_K\circ \mc X$ has this property, by~\eqref{eq:pinching_after_one},~\eqref{eq:destroy_correlations} implies ideal security in the sense of~\eqref{eq:quantumsecrecy}.
		\end{proof}

We now turn to a discussion of the properties of the above PQC. To begin with, note that it is catalytic in the sense that, in the absence of eavesdropping the entanglement is, at the end, returned back in its original state. This follows from~\eqref{eq:return_split}. Especially since entanglement is commonly considered an expensive resource, this is a very appealing feature, even though it is not very robust, as we will discuss in the next section.  

Secondly, our PQC construction is optimal when considered as a noisy process, in the sense that is impossible to construct an ideal PQC with less entanglement than we do, provided the global evolution is unitary. As in the case of the lower bounds for the dephasing map, discussed in Appendix~\ref{sec:necessary}, we prove this optimality with respect to approximate PQCs, in order to show that our results are robust against slight deviations from an ideal PQC. To do so, we call, in analogy to the classical PQC, \eqref{eq:pqc_reliability}, a private quantum channel with key $\ket{\Psi}_K$ \emph{$\epsilon$-reliable}, if, instead of \eqref{eq:quantumsecrecy}, it satisfies
\begin{align} \label{eq:approx_quantumsecrecy}
    \sup_{\rho_{S,E} \: \in \: \mc S(\mc H_S \otimes \mc H_E)} \norm{ \tr_K \circ (\mc X\otimes \text{id}_{K_BE})(\rho_{SE} \otimes \proj{\Psi}_K) - \tau\otimes\rho_{E}}_1 \leq \epsilon.
\end{align}

	\begin{lemma}[]
	 Let $(\mc X, \mc Y)$ be an $\epsilon$-reliable private quantum channel with key $\ket{\Psi}_K$ for a quantum system of dimension $d$. If $\mc X$ is a unitary channel, then there exists an $\epsilon_{cr}$ such that, for all $\epsilon < \epsilon_{cr}$,
	 \begin{align}
		 \dim(\mc H_A) \geq \max\left\{4,d^{1-\epsilon}\epsilon^{\frac{\epsilon}{2}}\right\}.
	 \end{align}
	\end{lemma}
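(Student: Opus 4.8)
The plan is to mirror the entropic lower bound of Appendix~\ref{sec:necessary}, but to feed the channel a \emph{maximally entangled} input so as to activate the supremum over extensions in \eqref{eq:approx_quantumsecrecy}. Concretely, I would take $\rho_{SE}=\proj{\Phi}$ to be a maximally entangled state vector on $\mc H_S\otimes\mc H_E$ with $\dim\mc H_E=\dim\mc H_S=d$, so that $\rho_E=\id_d$. Writing the unitary channel as $\mc X(\cdot)=U_{SA}(\cdot)U_{SA}^\dagger$ with $A=K_A$, the post-encoding state $\ket{\omega}_{S'AK_BE}=(U_{SA}\otimes\id_{K_BE})(\ket{\Phi}_{SE}\otimes\ket{\Psi}_K)$ is pure. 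The crucial structural observation is that $U_{SA}$ touches neither $K_B$ nor $E$, and that the key is initially uncorrelated with the input; hence $\omega_{K_BE}=\Psi_{K_B}\otimes\id_d$ is a product state and the mutual information $I(E:K_B)_\omega=0$.

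First I would turn the security requirement into a statement about correlations with $E$. Using purity of $\ket{\omega}$ (so that $S(\omega_{AK_B})=S(\omega_{S'E})$ and $S(\omega_{AK_BE})=S(\omega_{S'})$) together with $\dim\mc H_E=d$, one obtains the exact identity
\begin{align}
 I(E:AK_B)_\omega = \log d + S(\omega_{S'E}) - S(\omega_{S'}).
\end{align}
On the other hand, the chain rule and $I(E:K_B)_\omega=0$ give $I(E:AK_B)_\omega=I(E:A|K_B)_\omega\leq 2\log\dim\mc H_A$, the last step being the standard bound on (conditional) mutual information by twice the log-dimension of one of the two parties. Combining the two,
\begin{align}
 2\log\dim\mc H_A \geq \log d + S(\omega_{S'E}) - S(\omega_{S'}).
\end{align}
This is the analogue of the Lieb-Araki step in Appendix~\ref{sec:necessary}; the genuinely new ingredient is that conditioning on the untouched half $K_B$ of the key removes its dimension from the estimate, which is exactly what upgrades the scaling from $\sqrt{d}$ (the dephasing bound) to $d$.

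Next I would insert the $\epsilon$-security condition \eqref{eq:approx_quantumsecrecy}. For the chosen extension it reads $\norm{\omega_{S'E}-\tau\otimes\id_d}_1\leq\epsilon$, and monotonicity of the trace norm under partial trace gives $\norm{\omega_{S'}-\tau}_1\leq\epsilon$. The conditional entropy $S(\omega_{S'E})-S(\omega_{S'})$ has ideal value $\log d$, since $\tau\otimes\id_d$ is a product with $S(E|S')=\log d$. Applying a Fannes-type continuity estimate to each term, in the same quantitative form used to derive \eqref{eq:ent_bound}, yields a bound $S(\omega_{S'E})-S(\omega_{S'})\geq \log d - c_1\,\epsilon\log d + c_2\,\epsilon\log\epsilon$; matching the continuity constants to those of \eqref{eq:ent_bound} returns $\log\dim\mc H_A\geq(1-\epsilon)\log d+\tfrac{\epsilon}{2}\log\epsilon$, i.e. $\dim\mc H_A\geq d^{1-\epsilon}\epsilon^{\epsilon/2}$. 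The complementary bound $\dim\mc H_A\geq 4$ follows separately, exactly as in Appendix~\ref{sec:necessary}: the key must carry genuine entanglement, and for small $d$ the construction already uses $e=(\lceil d^{1/2}\rceil)^2\geq 4$, making the claim nontrivial.

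I expect the main obstacle to be purely quantitative: controlling the continuity corrections so that they assemble into precisely $d^{-\epsilon}\epsilon^{\epsilon/2}$, and fixing the threshold $\epsilon_{cr}$ below which the Fannes estimate applies and the bound is non-vacuous, in direct parallel with the constraint $\epsilon\leq 1/6e$ appearing in the quantum bound of Appendix~\ref{sec:necessary}. The conceptual steps---choosing the maximally entangled input, using purity to obtain the exact mutual-information identity, and exploiting $I(E:K_B)_\omega=0$ to discard the dimension of $K_B$---carry all the content and are short; only the $\epsilon$-bookkeeping requires care.
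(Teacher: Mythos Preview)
Your proposal is correct in spirit and takes a genuinely different route from the paper. The paper's argument is more direct: it first traces out $K_B$, invoking (as an implicit assumption from context) that $\tr_{K_B}\proj{\Psi}_K=\id_{d_A}$, so that the effective encoding on $S$ is literally a quantum noisy operation $\mc E_Q^{d_A}$ in the sense of~\eqref{eq:def:quantumnoise}. Feeding in the maximally entangled $\ket{\Psi}_{SE}$ then yields a global state on $SEA$ of entropy exactly $\log d_A$, and a single Lieb--Araki step on the bipartition $SE|A$ together with one application of Fannes on the $d^2$-dimensional system $SE$ gives $2\log d_A\geq 2\log d+\epsilon\log(\epsilon/d^2)$, i.e.\ the stated bound with the stated constants. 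Your approach instead keeps $K_B$ in the picture, exploits purity of the full state to rewrite $I(E:AK_B)_\omega$ in terms of $S(E|S')_\omega$, and uses the chain rule together with $I(E:K_B)_\omega=0$ to strip off the dimension of $K_B$. This is arguably cleaner conceptually and strictly more general---it does not need the reduced key on $A$ to be maximally mixed---but it costs you two continuity estimates (on $S(\omega_{S'E})$ and on $S(\omega_{S'})$) rather than one, so the $\epsilon$-corrections assemble into something like $d^{1-3\epsilon/2}\epsilon^{\epsilon}$ rather than exactly $d^{1-\epsilon}\epsilon^{\epsilon/2}$. You correctly anticipated that the obstacle is purely quantitative; if you want the constants to match the statement verbatim, the simplest fix is to trace out $K_B$ at the outset as the paper does.
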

	\begin{proof}
		The proof is fully analogous to the discussion of the quantum case in Appendix~\ref{sec:necessary}. We therefore only give a sketch. We have that $\tr_{K_B}(\ket{\Psi}_K)= \id_{d_A}$. Hence, $\epsilon$-reliability together with the fact that $\mc X = U \cdot U^\dagger$ for some unitary operator $U$ implies that the encoding channel on $S$ is a quantum noisy operation $\mc E_Q^{d_A}$ as defined in \eqref{eq:def:quantumnoise}. This further implies that $\tau = \id_d$, since the von Neumann entropy is non-decreasing under noisy operations and the channel has to work for the input state $\id_d$. 
	We now bound $d_A$ by considering a specific transition. Let $\ket{\Psi}_{SE}$ be the maximally entangled state over $SE$, where we choose the extension $\mc H_E$ to be a copy of $\mc H_S$. For this particular transition, $\epsilon$-reliability of the channel implies that 
	\begin{align}
	    \norm{\mc E_Q^{d_A} \otimes \text{id}_E (\proj{\Psi}_{SE}) - \id_d \otimes \id_d }_1 \leq \epsilon.
	\end{align}
By Fannes' inequality, this implies 
\begin{align}
	S(\mc E_Q^{d_A} \otimes \text{id}_E (\proj{\Psi}_{SE})) \geq  \log d^2 + \epsilon \log(\epsilon/d^2).
\end{align}
We now consider the bipartition of the system $SEA$ into $SE$ and $A$. Using the Lieb-Araki inequality and following, from here on, exactly the same reasoning as that of Appendix~\ref{sec:necessary} below Eq.~\eqref{eq:ent_bound}, yields the desired bound. 
	\end{proof}

\subsection{Error correction, authentication, key recycling} 
\label{sec:discussion_}

As noted above, a particularly convenient feature of our PQC construction is that it is catalytic. 
This property implies that, in the absence of eavesdropping, the quantum key, can be fully recycled. 
However, it is of course the basic premise of cryptography that one is not guaranteed the absence of eavesdropping. 
It is therefore natural to ask how robust our PQC-implementation is to eavesdropping, by asking: Can Alice and Bob correct errors inflicted by an eavesdropper? 
How well can Alice and Bob check whether eavesdropping has occurred? How much of the key can Alice and Bob reuse in case they detect eavesdropping? 

In this section we show that Alice and Bob can use additional ebits to error-correct, authenticate efficiently and recycle part of the key even when eavesdropping occurs. 
The results of this section are mostly a translation of the arguments and techniques of Ref.\ \cite{Leung2002Quantum} applied to our protocol.

\subsubsection{Error correction} 
\label{ssub:error_correction}

We first turn to the question of error correction. 
Consider, for simplicity, the case that Alice and Bob want to transmit a pure  two-qubit state vector $\ket{\phi}$ along our PQC construction (i.e., the setting depicted in Fig.~\ref{fig:pqc}). 
Following the results in the previous section, $\ket{\phi}$ can be sent using two ebits in the Bell state vector
\begin{align}
    \ket{\Phi^+} = \frac{1}{\sqrt{2}} (\ket{0,0} +\ket{1,1})
\end{align}
as a key.
We consider the effect of any Pauli error $P_i \in \{\mathbbm{1}, X, Y, Z\}^{\otimes 2}$ that may have occurred during transmission of the data. 
The reason for this is that the most general effect of eavesdropping on the encoded state $\id_d = \tr_K\circ \mc X(\proj{\phi})$ that is sent between Alice and Bob can be described by a quantum channel $\mc E$ with decomposition
 \begin{align}
     \mc E(\rho') &= \sum_{i,j=0}^{15} e_{i,j} P_i \rho' P_j^\dagger.
 \end{align}
Hence, if there exists a measurement using \emph{local operations with classical communication
(LOCC)} that lets Alice and Bob perfectly distinguish between any two Pauli errors without destroying the state, then they can decorrelate the message from an eavesdropper and also error-correct the message~\cite{Leung2002Quantum}.

We now turn to show that there exist choices for the unitary operator basis and MUBs in the PQC of Lemma~\ref{lem:our_pqc} such that Alice and Bob can discriminate any two Pauli error without destroying the transmitted state. This possibility arises because Alice and Bob can choose the encoding in such a way that there exists a one to one 
correspondence between Pauli errors and the final state of the entanglement they used for transmission. For this correspondence to arise it suffices to (a) use the unitary operator basis defined in~\eqref{eq:def_special_uni} as bases $\{U_i\}$ and $\{U_j\}$ in the construction of the unitaries $U$ and $V$; (b) choose $I = \{\ket{0}, \ket{1}\}$ and $J = \{\ket{+}= H\ket{0}, \ket{-}=H\ket{1}\}$, where $H$ is the Hadamard gate. For these choices, the total transmission process is given by  Fig.\ref{circ:qpqc}, as a circuit diagram. Here, possible errors are given by the dashed box, with Alice's encoding to the left and Bob's decoding to the right of the dashed box and where we ignore global phases (for example, identifying $Y \equiv XZ$) since they do not alter the outcome.
\begin{figure}[!t]
\includegraphics[width=0.48\textwidth]{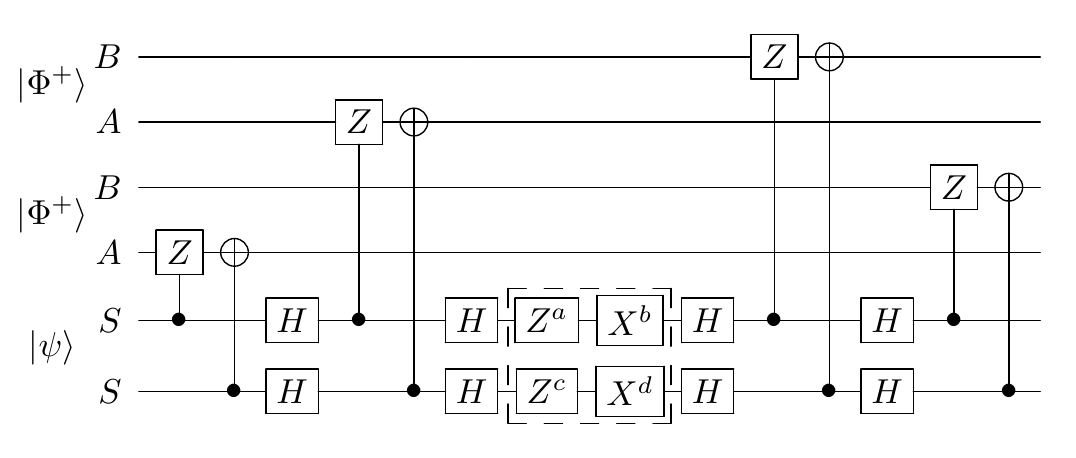}
\caption{The full entanglement-assisted PQC for a two-qubit state with Pauli matrices chosen as unitary operator basis and dephasing in the computational and Pauli $X$ eigenbases.}
\label{circ:qpqc}
\end{figure}

Using the relations
\begin{equation*}
\Qcircuit @C=.8em @R=0.1em @!R {
  & \targ & \gate{Z^a} & \gate{X^b} & \targ & \qw &&& 
    \gate{Z^a} & \gate{X^{b+d}} & \qw \\
   &&&&&& \push{\rule{.3em}{0em}=\rule{.3em}{0em}} &&&& \\
  & \ctrl{-2} & \gate{Z^c} & \gate{X^d} & \ctrl{-2} & \qw & & & \gate{Z^{a+c}} & \gate{X^d} &\qw
}
\end{equation*}
and
\begin{equation*}
\Qcircuit @C=.8em @R=0.1em @!R {
  & \gate{Z} & \gate{Z^a} & \gate{X^b} & \gate{Z} & \qw &&& 
    \gate{Z^{a+d}} & \gate{X^b} & \qw \\
   &&&&&& \push{\rule{.3em}{0em}=\rule{.3em}{0em}(-1)^{bd}} &&&& \\
  & \ctrl{-2} & \gate{Z^c} & \gate{X^d} & \ctrl{-2} & \qw & & & \gate{Z^{b+c}} & \gate{X^d} &\qw
}
\end{equation*}
together with the properties of entangled states, we find that, a Pauli error $P(a,b,c,d)$ described by the tuple $(a,b,c,d) \in \{0,1\}^{\times 4}$ yields the final state vector
\begin{align}
	(X^c Z^a)_{A_1} \otimes (X^{a+d}Z^b)_{A_2} \ket{\Phi^+}_{A_1B_1} \ket{\Phi^+}_{A_2B_2} P(a,b,c,d)\ket{\psi}, 
\end{align}
ignoring global phases and omitting identity operators. 
This implies that we can identify the tuple $(a,b,c,d)$ exactly just by distinguishing the Bell states, since no two different Pauli errors produce the same pair of Bell states, establishing the required correspondence. 
The same holds true also for mixed state messages, by linearity of quantum mechanics, and it also straightforwardly generalizes to the case of larger messages, since we can think of such messages as being sent in chunks of size $2$ using the above procedure.  

Going back to the case $n=2$, the above establishes a one to one correspondence between the sixteen possible Pauli errors on the ciphertext and the 16 possible combinations of Bell state vectors
\begin{align} \label{eq:bell_states}
    \ket{\Phi^\pm} &= \frac{1}{\sqrt{2}}(\ket{0,0} \pm \ket{1,1}), \\
    \ket{\Psi^\pm} &= \frac{1}{\sqrt{2}}(\ket{0,1} \pm \ket{1,0}).
\end{align}
If Alice and Bob could discriminate between these 16 combinations using LOCC measurements,
 then by the above this would mean that they can both decorrelate the decoded state from an eavesdropper as well as perform error-correction. 
However, this is not possible without the help of additional entanglement, since it is already impossible to discriminate between the four Bell state of a single ebit using LOCC measurements without further resources~\cite{Walgate2000Local}. 
However, the situation is different if Alice and Bob have access to additional ebits. 
In particular, let $\ket{\chi} \in \{\ket{\Phi^\pm}, \ket{\Psi^\pm}\}$ be an unknown Bell state vector. 
Then, if Alice and Bob share an auxiliary ebit prepared in the state vector $\ket{\Phi^+}$, they can each apply a $\mathsf{CNOT}$ gate, controlling on the auxilirary sytem $A$ and targeting $S$, which has the effect 
\begin{equation}
\begin{split}
    \ket{\Phi^+}_A \ket{\Phi^+}_S \to \ket{\Phi^+}_A \ket{\Phi^+}_S ,\\
    \ket{\Phi^+}_A \ket{\Phi^-}_S \to \ket{\Phi^-}_A \ket{\Phi^-}_S ,\\
    \ket{\Phi^+}_A \ket{\Psi^+}_S \to \ket{\Phi^+}_A \ket{\Psi^+}_S ,\\
    \ket{\Phi^+}_A \ket{\Psi^-}_S \to \ket{\Phi^-}_A \ket{\Psi^-}_S .\\
\end{split}
\end{equation}
If Alice and Bob now each measure their share of $A$ in the Pauli $X$ basis and their share of $S$ in the Pauli $Z$ basis and broadcast their measurement results, they can perfectly identify $\ket{\chi}$. 
Using this procedure for both ebits, they can extract full information about the error on the system and correct accordingly.

In summary, we have shown that Alice and Bob can perfectly discriminate between any two Pauli errors inflicted on the ciphertext during transmission, with the help of additional $n$ ebits. In this way, however, our PQC construction loses the advantage in resources over that of Ref.\ \cite{Leung2002Quantum}, where error correction is also possible using $2n$ ebits in total.

\subsubsection{Authentication and Key recycling} 
\label{ssub:key_recycling}

The above error correcting procedure has two disadvantages: Firstly, it requires a doubling of the total entanglement and, secondly, all the entanglement gets destroyed in the process. A more resource-effective strategy of Alice and Bob is to attempt to check for the occurrence of eavesdropping, destroying as little entanglement as possible, and consequently repeat the sending of the message whilst re-using as much of the entanglement as possible. We now discuss such a strategy in the asymptotic case, that is, when Alice and Bob send an $n$-qubit quantum message $\rho_S$ using $n$ ebits, in the limit $n \to \infty$. 

Let $\mathbf{v}$ be a $2n$-bit string encoding the final state of the $n$ ebits, with 
\begin{equation*}
     \ket{\Phi^+} \to 00, \ket{\Psi^+} \to 01, \,\,\ket{\Phi^-} \to 10, \ket{\Psi^-} \to 11 
 \end{equation*}
and the first two bits corresponding to the first ebit, etc. In order to check for the occurrence of eavesdropping, Alice and Bob can employ a LOCC-protocol constructed in Ref.\ \cite{Bennett1996MixedState} that yields the parity of any substring in $\mathbf{v}$, by destroying only a single ebit. Applying this protocol to $r$ random substrings of $\mathbf{v}$, one has
\begin{equation*}
    \text{Prob}(\mathbf{v} \neq 00\dots00 | \text{even parity in all $r$ rounds}) = \frac{1}{2^{-r}}.
\end{equation*}
Since $\mathbf{v} = 00\dots00$ corresponds to the case in which no Pauli error occurred, this implies that in case Alice and Bob measure no odd parity, they know that the message has been successfully transferred and that they can reuse their ebits for future communication, with exponentially small probability of mistake and at the cost of vanishingly few ebits. Now, in case they detect odd parity for any of their $r$ rounds, Alice and Bob consider the transfer unsuccessful and attempt to recycle as many of their ebits as possible. This amounts to estimating $\mathbf{v}$ while destroying as few ebits as possible in the course of doing so. We can directly apply a key recycling procedure presented in 
Ref.\ \cite{Leung2002Quantum} to our construction to achieve an asymptotic key recycling rate of $(1-H(\delta))$, where $H$ is the binary Shannon entropy and $\delta>0$ is a security parameter. We refer the reader to 
Ref.\ \cite{Leung2002Quantum} for details. 

These results should be compared with key recycling rates for the case of classical keys. There, the achievable recycling rates depend strongly on whether the message to be sent is classical (see, e.g., Refs.\ \cite{Bennett2014Quantum,Damgard2005Quantum,Fehr2017}) or quantum (see, e.g., Refs.\ \cite{Oppenheim2005How,Portmann2017,Hayden2016Universal}) and also on the possible attack scenarios that are being considered (see Refs.\ \cite{Hayden2016Universal} and~\cite{Portmann2017} for a discussion). Overall, however, the recycling rates can be considerably higher than those obtained here, albeit with significantly more complicated authentication schemes. Improving the recycling rates in the case of quantum keys thus remains an interesting open problem.

 \section{Quantum expanders} \label{sec:expanders_app}
 
 In this section, we discuss efficient approximate pinching to the main diagonal of an $d$-dimensional quantum
 system, of suitable dimension $d$, and provide the proof of Theorem 3. 
The proof of this statement is rooted in insights into random walks on expander graphs, is
connected to properties of Wigner functions of discrete Weyl systems and makes use of 
basic properties of quantum channels. It start from and builds upon the construction presented
in 
Ref.~\cite{Margulis}, which in turn derives from the classical description in Ref.~\
\cite{MargulisExpander}. The latter work discusses a random walk on an \emph{expander graph} 
featuring
the vertex set $\ZZ_e^2$, so an $e\times e$ integer lattice. Ref.~\cite{MargulisExpander} continues to show
that the random walk it constructs converges exponentially quickly to the 
uniform distribution ${\one}_{\ZZ_e^2}$ on this vertex set. Specifically, it is shown that
there exists a doubly stochastic matrix such that 
for any probability distribution ${P}$ on $\ZZ_e^2$, one has
\begin{equation}
	\|S^k (P) - {\one}_{\ZZ_e^2}\|_2\leq  \frac{5 \sqrt{2}}{8} \|S^{k-1} (P) - {\one}_{\ZZ_e^2} \|_2
\end{equation}
for $k\geq 1$ being an integer. Here, the action of the doubly stochastic map acting upon a 
distribution on ${\ZZ_e^2}$
 is written as $S(P)$.
On ${v}= ({v}_p,{v}_q)^T\in \ZZ_e^2$, this doubly stochastic matrix originates from random affine
transformations, drawn uniformly from the following eight  transformations
\begin{equation}
	v\mapsto T_1 {v},\, v\mapsto T_2 {v},\label{e1}\, 
	v\mapsto T_1 {v}+ {e}_1,\, {v}\mapsto T_2 {v}+{e}_2,
\end{equation}
and the four inverse transformations,
with 
\begin{eqnarray}\label{e3}
	T_1\coloneqq  \left(
	\begin{array}{cc}
	1& 2\\
	0 & 1\\
	\end{array}
	\right),\, 
	T_2\coloneqq  \left(
	\begin{array}{cc}
	1& 0\\
	2 & 1\\
	\end{array}
	\right),\, 
\end{eqnarray}
and
\begin{equation}\label{e4}
	{e}_1 = \left(
	\begin{array}{c}
	1 \\ 0
	\end{array}
	\right),\,
	{e}_2 = \left(
	\begin{array}{c}
	0 \\ 1
	\end{array}
	\right).
\end{equation}
The graph underlying this construction, with the $e\times e$ lattice as vertex set,
is an expander graph. Such an
expander graph is usually referred to as an
$(e^2,8,\lambda)$ expander graph with $\lambda \leq  5 \sqrt{2} /{8}$, in that it has $e^2$ vertices, 
each of which having $8$ neighbours in the graph. 
The matrix $S$ is sparse in that each row has $8$ entries only. Clearly, the above implies that
\begin{equation}\label{EXP}
	\|S^k (P) - \one_{\ZZ_e^2}\|_2\leq \sqrt{2} \left(
	 \frac{ 5 \sqrt{2}}{8}\right)^k.
\end{equation}
The prefactor of $\sqrt{2}$ originates from the fact that 
for any probability distribution $P$ on $\ZZ_e^2$, one has
\begin{eqnarray}
	\|P-  {\one}_{\ZZ_d^2}\|_2 &\leq & 
	((1-1/e^2) - (e^2-1)/e^2)^{1/2}\nonumber\\
	&=& \sqrt{2} (e^2-1)/e^2\nonumber\\
	&\leq &\sqrt{2}.
\end{eqnarray}	
We will relate this dimension $e$, which is left open at this point, to the physical dimension $d$ of the
quantum system subsequently.

The construction in a significantly altered setting will require some preparation.
For this, we turn to discussing the phase space $d\times d$ for the $d$-dimensional quantum system with odd $d$.
In the convention of Refs.~\cite{Margulis,DiscreteHudson},
for phase space coordinates $(p,q)\in \ZZ_d^2$, 
the discrete Wigner  function $W_M$ of an operator $M$ acting in Hilbert space can be written as
\begin{equation}
	W_M(p,q) = \frac{1}{d}{\rm tr}(w(p,q) \Pi w(p,q)^\dagger M),
\end{equation}
where $(p,q)\mapsto w(p,q)$ is the family of \emph{Weyl operators} and $\Pi$ is the 
\emph{parity operator}. The Weyl operators are composed of \emph{shift} and \emph{clock operators},
so the $X$ and $Z$ generalized Pauli matrices defined in Eq.\ (\ref{X}) and (\ref{Z}), respectively.
Any affine transformation $A$ the linear part of which having a unit determinant
on phase space coordinates $a\in \ZZ_d^2$, is unitarily reflected in Hilbert space as
\begin{equation}
	W_{U_A \rho U_A^\dagger} (a)= W_\rho (A^{-1} (a)).
\end{equation}
Wigner functions are normalized as
\begin{equation}\label{NORM}
	\sum_{(p,q)\in \ZZ_d^2} W_\rho(p,q)=1
\end{equation}	
for quantum states $\rho$. We treat Wigner functions for an operator $M$ as
matrices $W_M\in \CC^{d\times d}$, with real valued matrices for Hermitian $M$.
A first well-known insight is stated here as a separate lemma for completeness.

\begin{lemma}[Quantum states and Wigner functions]\label{StatesWigner} 
For two quantum states $\rho$ and $\sigma$ 
on a Hilbert space ${\cal H}_S$ of dimension $d$ associated with Wigner functions 
$W_\rho,W_\sigma: \ZZ_d^2\rightarrow \RR$, one has
\begin{equation}
	\|\rho-\sigma \|_2^2 =\| 
	W_\rho - W_\sigma\|_2^2=
	\sum_{{(p,q)\in \ZZ_d^2}}
	(W_\rho(p,q)- W_\sigma(p,q)  )^2.
\end{equation}
\end{lemma}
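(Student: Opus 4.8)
The plan is to recognize this as a Parseval/Plancherel identity for the discrete Wigner (Weyl) transform and to reduce everything to the orthogonality of the underlying phase-space point operators. First I would use linearity of the Wigner transform, $W_\rho - W_\sigma = W_{\rho-\sigma}$, and set $M \coloneqq \rho - \sigma$, which is Hermitian with $W_M$ real-valued. Since $\|\rho-\sigma\|_2^2 = \tr(M^\dagger M) = \tr(M^2)$, it suffices to prove the single operator identity $\tr(M^2) = \sum_{(p,q)\in\ZZ_d^2} W_M(p,q)^2$ for every Hermitian $M$; that is, that the map $M \mapsto W_M$ is an isometry from $(\mc B(\mc H_S),\|\cdot\|_2)$ onto $(\ell^2(\ZZ_d^2),\|\cdot\|_2)$.

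Next I would isolate the phase-space point operators $A(p,q)\coloneqq w(p,q)\,\Pi\, w(p,q)^\dagger$, so that $W_M(p,q) = \tfrac{1}{d}\tr(A(p,q) M)$, and record their elementary properties: each $A(p,q)$ is Hermitian, and $A(p,q)^2 = \id$ because $\Pi^2 = \id$ and $w(p,q)$ is unitary, so in particular $\tr(A(p,q)^2) = d$. The crux is the orthogonality relation
\[
	\tr\!\big(A(p,q)\,A(p',q')\big) = d\,\delta_{p,p'}\,\delta_{q,q'}.
\]
To establish it I would combine the parity covariance $\Pi\, w(c)\,\Pi = w(-c)$ with the Weyl composition law $w(a)w(b) = (\text{phase})\,w(a+b)$ and the trace-orthogonality $\tr(w(c)) = d\,\delta_{c,0}$. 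Pushing one parity operator through, the product $A(p,q)A(p',q')$ collapses, up to a phase, to the single Weyl operator $w\big(2((p,q)-(p',q'))\big)$, whose trace vanishes unless $2((p,q)-(p',q')) \equiv 0$ in $\ZZ_d^2$; here the hypothesis that $d$ is odd is essential, since it makes $2$ invertible in $\ZZ_d$ and thereby forces $(p,q)=(p',q')$.

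With the orthogonality in hand, I would note that the $d^2$ operators $\{A(p,q)\}$ are linearly independent and hence form an orthogonal basis of the $d^2$-dimensional space $\mc B(\mc H_S)$. Expanding $M$ in this basis, the orthogonality relation together with the definition of $W_M$ gives $M = \sum_{(p,q)} W_M(p,q)\,A(p,q)$, and a direct Parseval computation then expresses $\tr(M^2)$ through $\sum_{(p,q)} W_M(p,q)^2$, the definitional factor $1/d$ in $W_M$ being exactly what is needed for the Hilbert--Schmidt and $\ell^2$ normalizations to match and produce the stated equality. The main obstacle is the careful bookkeeping in the orthogonality computation --- tracking the cocycle phases in the Weyl composition law and exploiting the oddness of $d$ to invert the factor of $2$ --- whereas the reduction to this identity and the concluding Parseval step are routine.
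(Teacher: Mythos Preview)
Your proposal is correct and follows essentially the same idea as the paper: both reduce the statement to the Parseval/Plancherel identity $\tr(\rho\sigma)=\sum_{(p,q)} W_\rho(p,q)W_\sigma(p,q)$, which is a consequence of the orthogonality of an operator basis on phase space. The paper simply invokes this identity via the characteristic function (i.e., orthogonality of the Weyl operators), while you spell it out through the phase-space point operators $A(p,q)=w(p,q)\Pi w(p,q)^\dagger$; these are related by a symplectic Fourier transform and amount to the same argument.
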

\begin{proof}
This statement follows directly from the property that the Hilbert Schmidt scalar product is inherited as
\begin{equation}
	\tr(\rho \sigma ) = \sum_{{(p,q)\in \ZZ_d^2}} W_\rho(p,q) W_\sigma(p,q),
\end{equation}
as follows from the analogous property of the characteristic function, and the definition of the 2-norm.
\end{proof}

The main insight of Ref.~\cite{Margulis} is to acknowledge that random walks on integer lattices
that are expander graphs can be connected to random unitary channels acting in Hilbert space
that inherit the mixing properties from the classical random walk, by resorting to a phase space picture.
The construction of Ref.~\cite{Margulis} builds upon the random walk on the 
Margulis expander graph~\cite{MargulisExpander} the vertex set of which is $\ZZ_e^2$ for some $e$ (here
taken to be different from $d$, as it will take a different role subsequently).
This random walk can be unitarily realized in quantum systems: In fact, the random walk 
follows directly from a convergence of a Wigner function, a function
that shares all properties of a probability distribution, except being positive. 
Following the construction of the random walk on the expander graph,
the quantum Margulis expander can be seen as a random unitary map
\begin{equation}\label{RUC}
	\rho\mapsto \mc D(\rho) = \frac{1}{8} \sum_{j=1}^8 U_j \rho U_j^\dagger,
\end{equation}	
of Kraus rank 8 with suitable unitary $\{U_i\}$ 
with the property that 
\begin{equation}
	\|\mc D(\rho) - \id_e\|_2 \leq  \frac{5 \sqrt{2}}{8}  \|\rho-  \id_e\|_2.
\end{equation}
A second insight on discrete Wigner functions that we will make use of is the following.

\begin{lemma}[Wigner functions of pinched quantum states] For any quantum state $\rho$ on 
${\cal H}_S$ of dimension $d$, the Wigner function of $\pi(\rho)$ satisfies
\begin{equation}
	W_{\pi(\rho)} (p,q) = W_{\pi(\rho)}(p',q)
\end{equation}
for all $q,p,p' =1,\dots, d$.
\end{lemma}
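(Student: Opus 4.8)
The plan is to reduce the statement to a single phase-space symmetry and then read it off from the covariance of the Wigner transform under Weyl translations, so that essentially no explicit manipulation of $w(p,q)$ or $\Pi$ is needed. The only property of $\pi$ that enters is that it produces a state that is diagonal in the reference basis.

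First I would record that $\pi(\rho)=\sum_i \bra{i}\rho\ket{i}\,\proj{i}$ is, by construction, diagonal in the basis $A=\{\ket{i}\}$, which in the present phase-space setting is the eigenbasis of the clock operator $Z$ of Eq.~\eqref{Z}. Hence $\pi(\rho)$ commutes with $Z$ and with all of its powers, i.e.
\begin{equation}
Z\,\pi(\rho)\,Z^\dagger=\pi(\rho).
\end{equation}
This is the entire structural input; it holds for any $\rho$.

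Next I would translate this invariance into the Wigner picture. Conjugation by $Z$ is, up to an irrelevant global phase, conjugation by a Weyl operator, so it implements a pure translation on phase space: its linear part is the identity, which has unit determinant, and the covariance property $W_{U_A \rho U_A^\dagger}(a)=W_\rho(A^{-1}(a))$ therefore applies with $A$ the corresponding unit shift. In the Weyl convention used here this shift is along the $p$-axis, whence
\begin{equation}
W_{\pi(\rho)}(p,q)=W_{Z\,\pi(\rho)\,Z^\dagger}(p,q)=W_{\pi(\rho)}(p-1,q)
\end{equation}
for all $p,q$. Iterating the one-step relation (equivalently, using invariance under conjugation by every $Z^k$) gives $W_{\pi(\rho)}(p,q)=W_{\pi(\rho)}(p',q)$ for all $p,p',q$, which is the claim.

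The hard part is bookkeeping rather than conceptual: one must verify that conjugation by $Z$ translates the Wigner function along the $p$-coordinate and not along $q$, since this depends on the precise identification of $w(p,q)$ with products of $X$ and $Z$; the opposite convention would flatten $W_{\pi(\rho)}$ in $q$ instead. As an independent cross-check that pins down the correct coordinate, one may instead compute $W_{\pi(\rho)}$ directly from $W_M(p,q)=\tfrac1d\,\tr(w(p,q)\,\Pi\, w(p,q)^\dagger M)$: each diagonal projector $\proj{i}$ is a $Z$-eigenstate, whose Wigner function is supported on a single value of $q$ and is independent of $p$, so that any convex combination $\pi(\rho)=\sum_i \bra{i}\rho\ket{i}\,\proj{i}$ inherits the same $p$-independence.
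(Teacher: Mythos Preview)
Your argument is correct. The paper's own proof is a single sentence (``This statement follows directly from the definition of Wigner functions''), so any valid verification fills the same gap; your covariance argument via $Z\,\pi(\rho)\,Z^\dagger=\pi(\rho)$ together with the phase-space translation property stated just above the lemma is a clean way to do this, and your cross-check by computing $W_{\proj{i}}$ directly is essentially what the paper has in mind.
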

\begin{proof}
This statement follows directly from the definition of Wigner functions.
\end{proof}

This means that Wigner functions of pinched states are constant along the first coordinate.
Prepared in this fashion, we can finally turn to the new construction.
This construction of a random unitary channel will deviate from this construction in a
significant way: We identify for each $q\in \ZZ_d$ for $d=e^2$
the entire line 
$\{(p,q) \in \ZZ_d^2\}$ 
of the 
$d\times d$-dimensional phase space as a vectorized $e\times e$ lattice, on which the above affine
maps act. The property of the unit determinant 
of the linear part in the affine mapping is preserved.
In fact, it will act in precisely the same way on each line simultaneously, by applying 
one of the $8$ affine transformations defined in Eqs.\ (\ref{e1})-(\ref{e4}). This gives rise to $8$
affine maps on $\ZZ_d^2$.
Acting on Wigner functions,
this process can again be realized as a random unitary channel
\begin{equation}
	\rho\mapsto {\cal T}(\rho)  =\frac{1}{8}\sum_{j=1}^8 V_j \rho V_j^\dagger,
\end{equation}
with unitaries $\{V_i\}$. 
Clearly, the entire Wigner function $W_\rho$ of a state is normalized according to 
Eq.\ (\ref{NORM}). We refer to
\begin{equation}
	x_q\coloneqq \sum_{p\in \ZZ_d} W_\rho(p,q)
\end{equation}	
as the weight of each column.
We will now discuss the convergence properties of the above random unitary channel. For an integer $k\geq 1$,
we have
\begin{eqnarray}
	\|{\cal T}^k(\rho)- \pi(\rho)\|_2^2  &=& 
	\|W_{{\cal T}^k(\rho)}- W_{\pi(\rho)}\|_2^2 \\
	&= & \sum_{q\in  \ZZ_d} x_q^2 \sum_{p\in  \ZZ_d}
	\left(\frac{W_{{\cal T}^k(\rho)}(p,q)}{x_q} - \frac{1}{d} \right)^2 ,\nonumber
\end{eqnarray}
treating each columns separately. Using $x_q \leq d$ for all $q$ 
and using a worst case bound for all $q$ gives
\begin{eqnarray}
	\|{\cal T}^k(\rho)- \pi(\rho)\|_2^2  	
	&\leq &{d^3} \sum_{p\in  \ZZ_d}
	\left(\frac{W_{{\cal T}^k(\rho)}(p,q)}{x_q} - \frac{1}{d} \right)^2,
\end{eqnarray}
and following 	Eq.\ (\ref{EXP}), one obtains
\begin{eqnarray}
	 \|{\cal T}^k(\rho)- \pi(\rho)\|_2^2  	
	 &\leq &2 {d^3} \left(
	 \frac{ 5 \sqrt{2}}{8}\right)^{2k}.
\end{eqnarray}
In this way, we arrive at the anticipated result, by embedding the random unitary system into an
explicit quantum model, in the nomenclature of the main text.
\clearpage
\end{document}